\providecommand{\U}[1]{\protect\rule{.1in}{.1in}}
\newtheorem{theorem}{Theorem}[section]
\newtheorem{conjecture}[theorem]{Conjecture}
\newtheorem{corollary}[theorem]{Corollary}
\newtheorem{lemma}[theorem]{Lemma}
\newtheorem{problem}[theorem]{Problem}
\newtheorem{proposition}[theorem]{Proposition}
\newtheorem{remark}[theorem]{Remark}
\newenvironment{proof}[1][Proof]{\noindent\textbf{#1.} }{\ \rule{0.5em}{0.5em}}
\begin{document}

\title{Critical Independent Sets of a Graph}
\author{Vadim E. Levit\\Department of Computer Science and Mathematics\\Ariel University, Israel\\levitv@ariel.ac.il
\and Eugen Mandrescu\\Department of Computer Science\\Holon Institute of Technology, Israel\\eugen\_m@hit.ac.il}
\date{}
\maketitle

\begin{abstract}
Let $G$ be a simple graph with vertex set $V\left(  G\right)  $. A set
$S\subseteq V\left(  G\right)  $ is \textit{independent} if no two vertices
from $S$ are adjacent, and by $\mathrm{Ind}(G)$ we mean the family of all
independent sets of $G$.

The number $d\left(  X\right)  =$ $\left\vert X\right\vert -\left\vert
N(X)\right\vert $ is the \textit{difference} of $X\subseteq V\left(  G\right)
$, and a set $A\in\mathrm{Ind}(G)$ is \textit{critical} if $d(A)=\max
\{d\left(  I\right)  :I\in\mathrm{Ind}(G)\}$ \cite{Zhang1990}.

Let us recall the following definitions:%
\begin{align*}
\mathrm{core}\left(  G\right)   &  =%
{\displaystyle\bigcap}
\left\{  S:S\text{\textit{ is a maximum independent set}}\right\}  \text{
\cite{LevMan2002a},}\\
\mathrm{corona}\left(  G\right)   &  =%
{\displaystyle\bigcup}
\left\{  S:S\text{\textit{ is a maximum independent set}}\right\}  \text{
\cite{BorosGolLev},}\\
\mathrm{\ker}(G)  &  =%
{\displaystyle\bigcap}
\left\{  S:S\text{\textit{ is a critical independent set}}\right\}  \text{
\cite{LevMan2012a},}\\
\mathrm{diadem}(G)  &  =%
{\displaystyle\bigcup}
\left\{  S:S\text{\textit{ is a critical independent set}}\right\}  \text{.}%
\end{align*}

In this paper we present various structural properties of $\mathrm{\ker}(G)$,
in relation with $\mathrm{core}\left(  G\right)  $, $\mathrm{corona}\left(
G\right)  $, and $\mathrm{diadem}(G)$.

\textbf{Keywords:} independent set, critical set, ker, core, corona, diadem, matching

\end{abstract}

\section{Introduction}

Throughout this paper $G$ is a finite simple graph with vertex set $V(G)$ and
edge set $E(G)$. If $X\subseteq V\left(  G\right)  $, then $G[X]$ is the
subgraph of $G$ induced by $X$. By $G-W$ we mean either the subgraph
$G[V\left(  G\right)  -W]$, if $W\subseteq V(G)$, or the subgraph obtained by
deleting the edge set $W$, for $W\subseteq E(G)$. In either case, we use
$G-w$, whenever $W$ $=\{w\}$. If $A,B$ $\subseteq V\left(  G\right)  $, then
$(A,B)$ stands for the set $\{ab:a\in A,b\in B,ab\in E\left(  G\right)  \}$.

The \textit{neighborhood} $N(v)$ of a vertex $v\in V\left(  G\right)  $ is the
set $\{w:w\in V\left(  G\right)  $ \textit{and} $vw\in E\left(  G\right)  \}$,
while the \textit{closed neighborhood} $N[v]$\ of $v\in V\left(  G\right)  $
is the set $N(v)\cup\{v\}$; in order to avoid ambiguity, we use also
$N_{G}(v)$ instead of $N(v)$.

The \textit{neighborhood} $N(A)$ of $A\subseteq V\left(  G\right)  $ is
$\{v\in V\left(  G\right)  :N(v)\cap A\neq\emptyset\}$, and $N[A]=N(A)\cup A$.
We may also use $N_{G}(A)$ and $N_{G}\left[  A\right]  $, when referring to
neighborhoods in a graph $G$.

A set $S\subseteq V(G)$ is \textit{independent} if no two vertices from $S$
are adjacent, and by $\mathrm{Ind}(G)$ we mean the family of all the
independent sets of $G$. An independent set of maximum size is a
\textit{maximum independent set} of $G$, and the \textit{independence number
}$\alpha(G)$ of $G$ is $\max\{\left\vert S\right\vert :S\in\mathrm{Ind}(G)\}$.
Let $\Omega(G)$ denote the family of all maximum independent sets, and let%
\begin{align*}
\mathrm{core}(G)  &  =%
{\displaystyle\bigcap}
\{S:S\in\Omega(G)\}\text{ \cite{LevMan2002a}, and}\\
\mathrm{corona}(G)  &  =\cup\{S:S\in\Omega(G)\}\text{ \cite{BorosGolLev}.}%
\end{align*}
Clearly, $N\left(  \mathrm{core}(G)\right)  \subseteq$ $V\left(  G\right)
-\mathrm{corona}(G)$, and there are graphs with $N\left(  \mathrm{core}%
(G)\right)  $ $\neq$ $V\left(  G\right)  -\mathrm{corona}(G)$ (for an example,
see Figure \ref{fig101}). The problem of whether $\mathrm{core}(G)\neq
\emptyset$ is \textbf{NP}-hard \cite{BorosGolLev}.

\begin{figure}[h]
\setlength{\unitlength}{1.0cm} \begin{picture}(5,1.75)\thicklines
\multiput(5,0.5)(1,0){5}{\circle*{0.29}}
\multiput(5,1.5)(1,0){5}{\circle*{0.29}}
\put(5,0.5){\line(1,0){4}}
\put(5,1.5){\line(1,-1){1}}
\put(6,1.5){\line(1,-1){1}}
\put(6,1.5){\line(1,0){1}}
\put(7,0.5){\line(0,1){1}}
\put(7,0.5){\line(1,1){1}}
\put(8,1.5){\line(1,0){1}}
\put(9,0.5){\line(0,1){1}}
\put(4.65,0.5){\makebox(0,0){$a$}}
\put(4.65,1.5){\makebox(0,0){$b$}}
\put(6,0.2){\makebox(0,0){$c$}}
\put(7,0.2){\makebox(0,0){$d$}}
\put(8,0.2){\makebox(0,0){$e$}}
\put(9.3,0.5){\makebox(0,0){$f$}}
\put(5.65,1.5){\makebox(0,0){$x$}}
\put(7.3,1.5){\makebox(0,0){$y$}}
\put(8,1.2){\makebox(0,0){$u$}}
\put(9.3,1.5){\makebox(0,0){$v$}}
\put(3.3,1){\makebox(0,0){$G$}}
\end{picture}\caption{\textrm{core}$(G)=\{a,b\}$ and $V(G)-$ \textrm{corona}%
$(G)=N\left(  \mathrm{core}(G)\right)  \cup\{d\}=\{c,d\}$.}%
\label{fig101}%
\end{figure}

A \textit{matching} is a set $M$ of pairwise non-incident edges of $G$. If
$A\subseteq V(G)$, then $M\left(  A\right)  $ is the set of all the vertices
matched by $M$ with vertices belonging to $A$. A matching of maximum
cardinality, denoted $\mu(G)$, is a \textit{maximum matching}.

For $X\subseteq V(G)$, the number $\left\vert X\right\vert -\left\vert
N(X)\right\vert $ is the \textit{difference} of $X$, denoted $d(X)$. The
\textit{critical difference} $d(G)$ is $\max\{d(X):X\subseteq V(G)\}$. The
number $\max\{d(I):I\in\mathrm{Ind}(G)\}$ is the \textit{critical independence
difference} of $G$, denoted $id(G)$. Clearly, $d(G)\geq id(G)$. It was shown
in \cite{Zhang1990} that $d(G)$ $=id(G)$ holds for every graph $G$. If $A$ is
an independent set in $G$ with $d\left(  X\right)  =id(G)$, then $A$ is a
\textit{critical independent set} \cite{Zhang1990}. All pendant vertices not
belonging to $K_{2}$ components are included in every inclusion maximal
critical independent set.

For example, let $X=\{v_{1},v_{2},v_{3},v_{4}\}$ and $I=\{v_{1},v_{2}%
,v_{3},v_{6},v_{7}\}$ in the graph $G$ of Figure \ref{fig511}. Note that $X$
is a critical set, since $N(X)=\{v_{3},v_{4},v_{5}\}$ and $d(X)=1=d(G)$, while
$I$ is a critical independent set, because $d(I)=1=id(G)$. Other critical sets
are $\{v_{1},v_{2}\}$, $\{v_{1},v_{2},v_{3}\}$, $\{v_{1},v_{2},v_{3}%
,v_{4},v_{6},v_{7}\}$. \begin{figure}[h]
\setlength{\unitlength}{1cm}\begin{picture}(5,1.9)\thicklines
\multiput(6,0.5)(1,0){6}{\circle*{0.29}}
\multiput(5,1.5)(1,0){4}{\circle*{0.29}}
\multiput(4,0.5)(0,1){2}{\circle*{0.29}}
\put(10,1.5){\circle*{0.29}}
\put(4,0.5){\line(1,0){7}}
\put(4,1.5){\line(2,-1){2}}
\put(5,1.5){\line(1,-1){1}}
\put(5,1.5){\line(1,0){1}}
\put(6,0.5){\line(0,1){1}}
\put(6,0.5){\line(1,1){1}}
\put(7,1.5){\line(1,0){1}}
\put(8,0.5){\line(0,1){1}}
\put(10,0.5){\line(0,1){1}}
\put(10,1.5){\line(1,-1){1}}
\put(4,0.1){\makebox(0,0){$v_{1}$}}
\put(3.65,1.5){\makebox(0,0){$v_{2}$}}
\put(4.65,1.5){\makebox(0,0){$v_{3}$}}
\put(6.35,1.5){\makebox(0,0){$v_{4}$}}
\put(6,0.1){\makebox(0,0){$v_{5}$}}
\put(7,0.1){\makebox(0,0){$v_{6}$}}
\put(7,1.15){\makebox(0,0){$v_{7}$}}
\put(8,0.1){\makebox(0,0){$v_{9}$}}
\put(8.35,1.5){\makebox(0,0){$v_{8}$}}
\put(9.65,1.5){\makebox(0,0){$v_{11}$}}
\put(9,0.1){\makebox(0,0){$v_{10}$}}
\put(10,0.1){\makebox(0,0){$v_{12}$}}
\put(11,0.1){\makebox(0,0){$v_{13}$}}
\put(2.5,1){\makebox(0,0){$G$}}
\end{picture}\caption{\textrm{core}$(G)=\{v_{1},v_{2},v_{6},v_{10}\}$ is a
critical set.}%
\label{fig511}%
\end{figure}
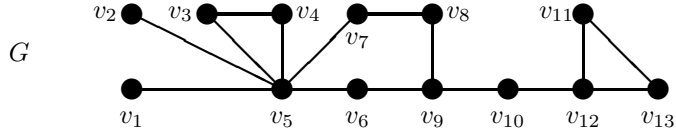

It is known that finding a maximum independent set is an \textbf{NP}-hard
problem \cite{GaryJohnson79}. Zhang proved that a critical independent set can
be find in polynomial time \cite{Zhang1990}. A simpler algorithm, reducing the
critical independent set problem to computing a maximum independent set in a
bipartite graph is given in \cite{Ageev}.

\begin{theorem}
\label{th3}\cite{ButTruk2007} Each critical independent set can be enlarged to
a maximum independent set.
\end{theorem}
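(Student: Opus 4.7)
The plan is to start from an arbitrary critical independent set $A$ and an arbitrary maximum independent set $S$, and to show that the independent set
\[
S_{1}=A\cup\bigl(S\setminus N[A]\bigr)
\]
has size at least $|S|$. Since $A\cap N[A]=A$ and the two pieces of $S_{1}$ are non-adjacent (vertices of $S\setminus N[A]$ have no neighbor in $A$), $S_{1}$ is indeed independent, and a short count gives $|S_{1}|=|S|+|A\setminus S|-|S\cap N(A)|$. Hence the theorem will follow once I establish the inequality
\[
|S\cap N(A)|\le |A\setminus S|,
\]
because then $|S_{1}|\ge|S|=\alpha(G)$ forces equality, and $S_{1}\supseteq A$ is the desired maximum independent set.

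The key step is to produce a matching $M$ in the bipartite graph $G[A\cup N(A)]$ that saturates $N(A)$; this is the main obstacle and is exactly where I will use that $A$ is critical. By König/Hall it suffices to verify Hall's condition: for every $Y\subseteq N(A)$, $|N(Y)\cap A|\ge|Y|$. I would argue by contradiction: if some $Y$ violates it, consider $A'=A\setminus N(Y)$. Then $A'$ is independent, and because no vertex of $A'$ is adjacent to any vertex of $Y$, one has $N(A')\subseteq N(A)\setminus Y$. A direct computation then yields
\[
d(A')\ge |A|-|N(Y)\cap A|-\bigl(|N(A)|-|Y|\bigr)=d(A)+\bigl(|Y|-|N(Y)\cap A|\bigr)>d(A),
\]
contradicting the criticality $d(A)=id(G)$. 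So such a matching $M$ exists.

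Finally I would finish with a short exchange argument along $M$. For each $v\in S\cap N(A)\subseteq N(A)$ the matching provides a partner $M(v)\in A$ adjacent to $v$. If $M(v)$ were in $S$, then $v$ and $M(v)$ would be two adjacent vertices of the independent set $S$, impossible; hence $M(v)\in A\setminus S$. Since $M$ is a matching, $v\mapsto M(v)$ is injective, giving $|S\cap N(A)|\le|A\setminus S|$, which is precisely the inequality needed above. This completes the plan. The only real work is the Hall-type estimate in the middle paragraph, where criticality is used in its sharp form; the two enveloping steps are purely combinatorial bookkeeping.
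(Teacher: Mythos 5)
Your proof is correct, and it is worth noting that the paper itself offers no proof of this statement: Theorem~\ref{th3} is quoted from \cite{ButTruk2007}, so there is no internal argument to compare against. Your proposal is a legitimate self-contained derivation. The outline is sound: $S_{1}=A\cup\bigl(S\setminus N[A]\bigr)$ is independent because $A\cap N(A)=\emptyset$ (as $A$ is independent) and no vertex of $S\setminus N[A]$ meets $N[A]$; the count $|S_{1}|=|S|+|A\setminus S|-|S\cap N(A)|$ is right; and the injection $v\mapsto M(v)$ from $S\cap N(A)$ into $A\setminus S$ closes the argument. Your middle step --- a matching of $N(A)$ into $A$ --- is exactly the paper's Theorem~\ref{th2} (Larson), which the authors also only cite; your verification of Hall's condition via $A'=A\setminus N(Y)$, using $N(A')\subseteq N(A)\setminus Y$ and the criticality of $A$ to get $d(A')>d(A)=id(G)$, is a clean and correct proof of that lemma, so nothing in your chain rests on an unproved external fact. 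Two cosmetic points: the induced subgraph $G[A\cup N(A)]$ need not be bipartite (there may be edges inside $N(A)$), so you should say you apply Hall's theorem to the bipartite graph whose edge set is $(A,N(A))$; and when $A'=\emptyset$ the contradiction still goes through because $d(\emptyset)=0\le id(G)$, so no case distinction is needed. Neither affects correctness.
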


Theorem \ref{th3} led to an efficient way of approximating $\alpha(G)$
\cite{Truchanov2008}. Moreover, it has been shown that a critical independent
set of maximum cardinality can be computed in polynomial time
\cite{Larson2007}. Recently, a parallel algorithm computing the critical
independence number was developed \cite{DeLaVinaLarson2013}.

Recall that if $\alpha(G)+\mu(G)=\left\vert V(G)\right\vert $, then $G$ is a
\textit{K\"{o}nig-Egerv\'{a}ry graph} \cite{Deming1979,Sterboul1979}. As a
well-known example, each bipartite graph is a K\"{o}nig-Egerv\'{a}ry graph as well.

\begin{theorem}
\label{Th5} \cite{LevMan2003} If $G$ is a K\"{o}nig-Egerv\'{a}ry graph, $M$ is
a maximum matching of $G$, and $S\in\Omega\left(  G\right)  $, then:

\emph{(i)} $M$ matches $V\left(  G\right)  -S$ into $S$, and $N(\mathrm{core}%
(G))$ into $\mathrm{core}(G)$;

\emph{(ii)} $N\left(  \mathrm{core}(G)\right)  =\cap\left\{  V(G)-S:S\in
(G)\right\}  $, i.e., $N\left(  \mathrm{core}(G\right)  )=V\left(  G\right)
-\mathrm{corona}(G)$.
\end{theorem}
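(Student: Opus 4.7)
The plan hinges on the K\"{o}nig-Egerv\'{a}ry identity: for any $S\in\Omega(G)$, $\left\vert V(G)-S\right\vert =\left\vert V(G)\right\vert -\alpha(G)=\mu(G)=\left\vert M\right\vert$. First I would establish the first half of (i) by a counting argument: $M$ contributes $\left\vert M\right\vert$ pairwise disjoint edges, and since $S$ is independent each such edge has at least one endpoint in $V(G)-S$; the equality $\left\vert M\right\vert =\left\vert V(G)-S\right\vert$ then forces every edge of $M$ to have exactly one endpoint in $V(G)-S$ and the other in $S$. Crucially, the same conclusion holds with any other $S'\in\Omega(G)$ substituted for $S$, since the same $M$ remains a maximum matching of $G$.

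For the second half of (i), I would take $v\in N(\mathrm{core}(G))$ together with a neighbor $u\in\mathrm{core}(G)$. Because $\mathrm{core}(G)\subseteq S$ and $S$ is independent, $v\in V(G)-S$, so $M$ matches $v$ to some $w\in S$. To show $w\in\mathrm{core}(G)$, pick any $S'\in\Omega(G)$: since $u\in S'$ the edge $uv$ forbids $v\in S'$, so the first half of (i) applied to $S'$ places the $M$-partner of $v$ inside $S'$; uniqueness of that partner yields $w\in S'$, and since $S'$ was arbitrary, $w\in\mathrm{core}(G)$.

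Part (ii) splits into the De Morgan identity $\bigcap\{V(G)-S:S\in\Omega(G)\}=V(G)-\mathrm{corona}(G)$, which is immediate, together with the two set inclusions. The containment $N(\mathrm{core}(G))\subseteq V(G)-\mathrm{corona}(G)$ was already recorded in the introduction. For the reverse inclusion I would take $v\in V(G)-\mathrm{corona}(G)$, whence $v\notin S$ and $M$ matches $v$ to some $w\in S$. If some $S''\in\Omega(G)$ omitted $w$, the first half of (i) applied to $S''$ would match $w\in V(G)-S''$ into $S''$; but the $M$-partner of $w$ is uniquely $v$, forcing $v\in S''$ and contradicting $v\notin\mathrm{corona}(G)$. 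Hence $w\in\mathrm{core}(G)$ and $v\in N(\mathrm{core}(G))$.

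The main obstacle will be recognizing, at the outset, that the K\"{o}nig-Egerv\'{a}ry hypothesis produces a counting equality tight enough to force the \emph{same} matching $M$ to act as a bijection between $V(G)-S'$ and a subset of $S'$ for every $S'\in\Omega(G)$ simultaneously. Once this uniform matching structure is in hand, both parts of the theorem reduce to short arguments about the uniqueness of the $M$-partner, plus De Morgan's law.
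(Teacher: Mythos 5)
Your proof is correct: the counting argument (each of the $|M|=\mu(G)=|V(G)|-\alpha(G)=|V(G)-S|$ disjoint edges of $M$ meets $V(G)-S$ in at least one vertex, forcing exactly one) establishes part \emph{(i)} for every $S'\in\Omega(G)$ simultaneously, and the uniqueness of $M$-partners then yields both the matching of $N(\mathrm{core}(G))$ into $\mathrm{core}(G)$ and the reverse inclusion in \emph{(ii)}. The paper itself only cites this theorem from \cite{LevMan2003} without reproducing a proof, and your argument is the standard one for that result, so there is nothing to flag.
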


The \textit{deficiency} $def(G)$ is the number of non-saturated vertices
relative to a maximum matching, i.e., $def(G)=\left\vert V\left(  G\right)
\right\vert -2\mu(G)$ \cite{LovPlum1986}. A proof of a conjecture of
Graffiti.pc \cite{DeLaVina} yields a new characterization of
K\"{o}nig-Egerv\'{a}ry graphs: these are exactly the graphs, where there
exists a critical maximum independent set \cite{Larson2011}. In
\cite{LevMan2012b} it is proved the following.

\begin{theorem}
\label{th8}\cite{LevMan2012b} For a K\"{o}nig-Egerv\'{a}ry graph $G$ the
following equalities hold
\[
d(G)=\left\vert \mathrm{core}(G)\right\vert -\left\vert N(\mathrm{core}%
(G))\right\vert =\alpha(G)-\mu(G)=def(G).
\]

\end{theorem}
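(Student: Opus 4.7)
The plan is to decompose the chain of equalities into three pieces and handle each in turn: first $\mathrm{def}(G)=\alpha(G)-\mu(G)$, then $\alpha(G)-\mu(G) = |\mathrm{core}(G)|-|N(\mathrm{core}(G))|$, and finally $d(G)=\alpha(G)-\mu(G)$. The first piece is immediate from the definition of a K\"onig-Egerv\'ary graph: since $\alpha(G)+\mu(G)=|V(G)|$, we get $\mathrm{def}(G)=|V(G)|-2\mu(G)=\alpha(G)-\mu(G)$.

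For the central equality, I fix a maximum matching $M$. By Theorem~\ref{Th5}(i), for every $S\in\Omega(G)$ the matching $M$ saturates all of $V(G)-S$, hence every $M$-unsaturated vertex lies in $S$. Intersecting over $S\in\Omega(G)$, every $M$-unsaturated vertex belongs to $\mathrm{core}(G)$, and there are exactly $\mathrm{def}(G)=\alpha(G)-\mu(G)$ such vertices. Next I count these unsaturated vertices inside $\mathrm{core}(G)$ in a second way: Theorem~\ref{Th5}(i) also says $M$ matches $N(\mathrm{core}(G))$ into $\mathrm{core}(G)$, and any $M$-edge incident to a vertex of $\mathrm{core}(G)$ clearly has its other endpoint in $N(\mathrm{core}(G))$. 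So $M$ restricts to a bijection between $N(\mathrm{core}(G))$ and $M(N(\mathrm{core}(G)))\subseteq\mathrm{core}(G)$, leaving exactly $|\mathrm{core}(G)|-|N(\mathrm{core}(G))|$ vertices of $\mathrm{core}(G)$ unsaturated. Equating the two counts of the unsaturated set yields the desired identity.

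For the last equality, note that $\mathrm{core}(G)$ is independent (as an intersection of independent sets), so
\[
d(G)\ge id(G)\ge d(\mathrm{core}(G))=|\mathrm{core}(G)|-|N(\mathrm{core}(G))|=\alpha(G)-\mu(G),
\]
using the previous step. For the reverse inequality, let $I$ be any independent set. Every $M$-edge incident to $I$ has its other endpoint in $N(I)$, so the number of $M$-saturated vertices of $I$ is at most $|N(I)|$; the number of $M$-unsaturated vertices of $I$ is at most $\mathrm{def}(G)=\alpha(G)-\mu(G)$. Adding gives $|I|\le|N(I)|+\alpha(G)-\mu(G)$, i.e.\ $d(I)\le\alpha(G)-\mu(G)$, so $d(G)=id(G)\le\alpha(G)-\mu(G)$.

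The step I expect to require the most care is the observation that \emph{every} $M$-unsaturated vertex lies in $\mathrm{core}(G)$: it relies on applying Theorem~\ref{Th5}(i) simultaneously for all $S\in\Omega(G)$ and then intersecting, and it is what converts the global deficiency count into the local count $|\mathrm{core}(G)|-|N(\mathrm{core}(G))|$ that drives the whole proof.
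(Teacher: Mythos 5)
The paper states Theorem~\ref{th8} as an imported result from \cite{LevMan2012b} and gives no proof of it here, so there is nothing internal to compare your argument against; judged on its own, your proof is correct and is built entirely from tools the paper already provides. The first equality is indeed immediate from the K\"onig--Egerv\'ary identity; the double count of the $M$-unsaturated vertices (all of them lie in $\mathrm{core}(G)$ by intersecting Theorem~\ref{Th5}(i) over $\Omega(G)$, and exactly $\left\vert \mathrm{core}(G)\right\vert -\left\vert N(\mathrm{core}(G))\right\vert$ of them lie there by the matching of $N(\mathrm{core}(G))$ into $\mathrm{core}(G)$) is sound; and the two inequalities for $d(G)$ close the chain. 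One small remark: in your upper bound you restrict to independent $I$ and then invoke Zhang's equality $d(G)=id(G)$, but the same counting (each $M$-saturated vertex of $X$ is matched to a distinct vertex of $N(X)$, each unsaturated one is charged to $def(G)$) gives $d(X)\leq def(G)$ for an arbitrary set $X\subseteq V(G)$, so the appeal to Zhang's theorem is avoidable. Also note that both outer inequalities, $d(G)\geq\alpha(G)-\mu(G)$ (Corollary~\ref{cor1}) and $d(G)\leq def(G)$, hold for every graph; the K\"onig--Egerv\'ary hypothesis is used only to glue them via $def(G)=\alpha(G)-\mu(G)$ and to make Theorem~\ref{Th5}(i) available for the middle equality, which your write-up correctly identifies as the delicate step.
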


Using this finding, we have strengthened the characterization from
\cite{Larson2011}.

\begin{theorem}
\label{th5}\cite{LevMan2012b} $G$ is a K\"{o}nig-Egerv\'{a}ry graph if and
only if each of its maximum independent sets is critical.
\end{theorem}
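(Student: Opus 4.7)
The plan is to prove the two implications separately, leaning on Theorems \ref{Th5} and \ref{th8} for the forward direction and on the characterization of K\"onig-Egerv\'ary graphs from \cite{Larson2011} for the reverse direction.

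For the forward direction, I would start with $G$ a K\"onig-Egerv\'ary graph and an arbitrary $S\in\Omega(G)$. Since $S$ is independent, $N(S)\subseteq V(G)-S$. Theorem \ref{Th5}(i) supplies the reverse inclusion: a maximum matching $M$ matches every vertex of $V(G)-S$ into $S$, so each vertex of $V(G)-S$ has a neighbor in $S$ and hence $V(G)-S\subseteq N(S)$. Combining these inclusions with the K\"onig-Egerv\'ary identity $|V(G)|=\alpha(G)+\mu(G)$ gives $|N(S)|=|V(G)|-\alpha(G)=\mu(G)$, whence $d(S)=\alpha(G)-\mu(G)$. Theorem \ref{th8} identifies this common value with $d(G)=id(G)$, so $d(S)=id(G)$ and $S$ is critical.

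For the reverse direction, assume every maximum independent set of $G$ is critical. Because $G$ is finite, $\Omega(G)\neq\emptyset$, so in particular at least one maximum independent set is critical. The characterization recalled just before Theorem \ref{th8} (namely \cite{Larson2011}) then forces $G$ to be K\"onig-Egerv\'ary.

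I do not foresee a serious obstacle: the statement is essentially a strengthening of the cited characterization from \cite{Larson2011}, which makes the backward implication automatic, while the forward implication reduces, via Theorem \ref{Th5}(i), to the structural identity $N(S)=V(G)-S$ for every $S\in\Omega(G)$; once that is established, the numerical content is exactly what Theorem \ref{th8} provides. The only place where one needs to be careful is to invoke \emph{both} inclusions when identifying $|N(S)|$ with $\mu(G)$, since the inclusion $N(S)\subseteq V(G)-S$ holds for any independent set but the reverse inclusion is precisely where the K\"onig-Egerv\'ary hypothesis enters through Theorem \ref{Th5}(i).
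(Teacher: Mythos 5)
Your proof is correct and follows essentially the route the paper itself indicates (the theorem is quoted from \cite{LevMan2012b}, and the surrounding text says it is obtained by "using this finding," i.e., Theorem \ref{th8}, to strengthen the characterization of \cite{Larson2011} --- exactly your decomposition). The forward direction via $N(S)=V(G)-S$ from Theorem \ref{Th5}\emph{(i)} together with $d(S)=\alpha(G)-\mu(G)=d(G)=id(G)$ from Theorem \ref{th8} is sound, and the backward direction is indeed immediate from the existence of a critical maximum independent set.
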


For a graph $G$, let denote
\begin{align*}
\mathrm{\ker}(G)  &  =%
{\displaystyle\bigcap}
\left\{  S:S\text{ \textit{is a critical independent set}}\right\}  \text{
\cite{LevMan2012a}, and}\\
\mathrm{diadem}(G)  &  =\bigcup\left\{  S:S\text{ \textit{is a critical
independent set}}\right\}  \text{.}%
\end{align*}

In this paper we present several properties of $\mathrm{\ker}(G)$, in relation
with $\mathrm{core}(G)$, $\mathrm{corona}(G)$, and $\mathrm{diadem}(G)$.

\section{Preliminaries}

Let $G$ be the graph from Figure \ref{fig511}; the sets $X=\left\{
v_{1},v_{2},v_{3}\right\}  $, $Y=\left\{  v_{1},v_{2},v_{4}\right\}  $\ are
critical independent, and the sets $X\cap Y$, $X\cup Y$ are also critical, but
only $X\cap Y$ is also independent.\ In addition, one can easily see that
$\mathrm{\ker}(G)$ is a minimal critical independent set of $G$. These
properties of critical sets and $\mathrm{\ker}(G)$ are true even in general.

\begin{theorem}
\label{th4}\cite{LevMan2012a} For a graph $G$, the following assertions are true:

\emph{(i)} the function $d$ is supermodular, i.e., $d(A\cup B)+d(A\cap B)\geq
d(A)+d(B)$ for every $A,B\subseteq V(G)$;

\emph{(ii)} if $A$ and $B$ are critical in $G$, then $A\cup B$ and $A\cap B$
are critical as well;

\emph{(iii)} $G$ has a unique minimal independent critical set, namely,
$\mathrm{\ker}(G)$.
\end{theorem}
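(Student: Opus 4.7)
The plan is to establish the three items in sequence, with supermodularity in (i) powering the other two.

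For (i), I would use two set identities. First, $|A\cup B|+|A\cap B|=|A|+|B|$ is immediate. Second, for neighborhoods, the equality $N(A\cup B)=N(A)\cup N(B)$ follows from the definition (a vertex is adjacent to $A\cup B$ iff it is adjacent to $A$ or to $B$), while the inclusion $N(A\cap B)\subseteq N(A)\cap N(B)$ follows because any vertex with a neighbor in $A\cap B$ has a neighbor in both $A$ and $B$. Combining these with inclusion-exclusion gives $|N(A\cup B)|+|N(A\cap B)|\le|N(A)|+|N(B)|$, and subtracting this from the cardinality identity yields $d(A\cup B)+d(A\cap B)\ge d(A)+d(B)$.

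For (ii), I would squeeze from both sides. Since $A,B$ are critical, $d(A)=d(B)=d(G)$, so by (i) we have $d(A\cup B)+d(A\cap B)\ge 2\,d(G)$. But $d(G)$ is the maximum of $d$ over all subsets of $V(G)$, whence $d(A\cup B)\le d(G)$ and $d(A\cap B)\le d(G)$; these two bounds force both quantities to equal $d(G)$. Note that this argument does not require $A\cup B$ to be independent, and indeed it need not be, as the example preceding the theorem illustrates.

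For (iii), I would first observe that the intersection of independent sets is independent, so $\mathrm{\ker}(G)$ is independent. Since $V(G)$ is finite, the family of critical independent sets of $G$ is finite, and iterating (ii) (the intersection of two critical sets is critical) shows that $\mathrm{\ker}(G)$ is itself critical. For uniqueness and minimality, any critical independent set $A$ contains $\mathrm{\ker}(G)$ by the very definition of $\mathrm{\ker}$, so $\mathrm{\ker}(G)$ sits inside every critical independent set; being itself critical and independent, it is therefore the unique minimal such set.

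I do not foresee a major obstacle here; the only subtlety worth flagging is that ``critical'' in (ii) allows $A\cup B$ to fail to be independent, so the argument must rest on the set-theoretic supermodular inequality rather than on any reasoning that invokes independence of the union.
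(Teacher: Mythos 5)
Your proof is correct. The paper itself gives no proof of this theorem (it is quoted from \cite{LevMan2012a}), and your argument is the standard one: the supermodularity of $d$ via $N(A\cup B)=N(A)\cup N(B)$ and $N(A\cap B)\subseteq N(A)\cap N(B)$, the squeeze against the maximum $d(G)$ for (ii), and finiteness of the family of critical independent sets together with closure under intersection for (iii); your remark that (ii) must not assume independence of $A\cup B$ is exactly the right point to flag.
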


As a consequence, we have the following.

\begin{corollary}
For every graph $G$, $\mathrm{diadem}(G)$ is a critical set.
\end{corollary}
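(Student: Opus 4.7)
The plan is to combine the supermodularity of $d$ (Theorem \ref{th4}(i)) with the fact that the union of two critical sets is critical (Theorem \ref{th4}(ii)), and apply these to the finite family of critical independent sets.

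First I would observe that every critical independent set $S$ is, in particular, a critical \emph{set} in the sense of Theorem \ref{th4}(ii): by Zhang's identity $d(G)=id(G)$, the equality $d(S)=id(G)$ forces $d(S)=d(G)$. Thus every element of the family $\mathcal{C}=\{S:S\text{ is a critical independent set}\}$ achieves the maximum of $d$ over all subsets of $V(G)$, not merely over independent sets.

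Next I would argue inductively. Given any two members $A,B\in\mathcal{C}$, Theorem \ref{th4}(ii) yields that $A\cup B$ is critical, even though $A\cup B$ need not remain independent (as illustrated by the example $X,Y$ in the graph of Figure \ref{fig511}). Iterating this on the finite list $S_{1},S_{2},\ldots,S_{k}$ of all critical independent sets of $G$ (finite since $V(G)$ is finite), the union
\[
\mathrm{diadem}(G)=S_{1}\cup S_{2}\cup\cdots\cup S_{k}
\]
is critical by a straightforward induction: at each step $T_{i}=S_{1}\cup\cdots\cup S_{i}$ is critical, and then $T_{i+1}=T_{i}\cup S_{i+1}$ is a union of two critical sets, hence critical by Theorem \ref{th4}(ii).

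I do not anticipate a real obstacle here, since the whole proof is driven by Theorem \ref{th4}. The only subtlety worth flagging is the distinction between a critical set and a critical \emph{independent} set: the corollary asserts only that $\mathrm{diadem}(G)$ is critical, not that it is independent, and indeed it generally fails to be independent. Consequently, Theorem \ref{th4}(ii) is the right tool, whereas part (iii) — which identifies the minimal critical \emph{independent} set — plays no role.
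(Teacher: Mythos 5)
Your proposal is correct and follows exactly the route the paper intends: it derives the corollary from Theorem \ref{th4}\emph{(ii)} by noting that each critical independent set is a critical set (via $d(G)=id(G)$) and iterating the union over the finite family. No further comment is needed.
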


For instance, the graph $G$ from Figure \ref{fig511} has $\mathrm{diadem}%
(G)=\left\{  v_{1},v_{2},v_{3},v_{4},v_{6},v_{7},v_{10}\right\}  $, which is
critical, but not independent.

\begin{figure}[h]
\setlength{\unitlength}{1cm}\begin{picture}(5,1.8)\thicklines
\multiput(1,0.5)(1,0){6}{\circle*{0.29}}
\multiput(1,1.5)(1,0){5}{\circle*{0.29}}
\put(1,0.5){\line(1,0){5}}
\put(1,1.5){\line(1,-1){1}}
\put(2,0.5){\line(0,1){1}}
\put(3,1.5){\line(1,-1){1}}
\put(3,1.5){\line(1,0){1}}
\put(4,0.5){\line(0,1){1}}
\put(5,1.5){\line(1,-1){1}}
\put(5,0.5){\line(0,1){1}}
\put(0.7,1.5){\makebox(0,0){$a$}}
\put(0.7,0.5){\makebox(0,0){$b$}}
\put(2.3,1.5){\makebox(0,0){$c$}}
\put(3,0.85){\makebox(0,0){$d$}}
\put(3,0){\makebox(0,0){$G_{1}$}}
\multiput(7,0.5)(1,0){7}{\circle*{0.29}}
\multiput(7,1.5)(1,0){6}{\circle*{0.29}}
\put(7,0.5){\line(1,0){6}}
\put(7,1.5){\line(1,-1){1}}
\put(8,0.5){\line(0,1){1}}
\put(9,0.5){\line(0,1){1}}
\put(9,0.5){\line(1,1){1}}
\put(9,1.5){\line(1,0){1}}
\put(11,0.5){\line(0,1){1}}
\put(12,1.5){\line(1,-1){1}}
\put(11,1.5){\line(1,0){1}}
\put(6.7,1.5){\makebox(0,0){$x$}}
\put(6.7,0.5){\makebox(0,0){$y$}}
\put(8.3,1.5){\makebox(0,0){$z$}}
\put(10,0.8){\makebox(0,0){$w$}}
\put(10,0){\makebox(0,0){$G_{2}$}}
\end{picture}\caption{Both $G_{1}$ and $G_{2}$\ are not K\"{o}nig-Egerv\'{a}ry
graphs.}%
\label{fig22}%
\end{figure}
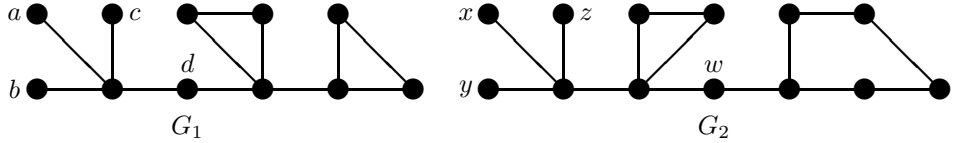

The graph $G$ from Figure \ref{fig101} has $d\left(  G\right)  =1$ and
$d\left(  \mathrm{corona}(G)\right)  =0$, which means that $\mathrm{corona}%
(G)$ is not a critical set. Notice that $G$ is not a K\"{o}nig-Egerv\'{a}ry
graph. Combining Theorems \ref{th5} and \ref{th4}\emph{(ii)}, we deduce the following.

\begin{corollary}
\label{cor2}If $G$ is a K\"{o}nig-Egerv\'{a}ry graph, then both $\mathrm{core}%
(G)$ and $\mathrm{corona}(G)$ are critical sets.
\end{corollary}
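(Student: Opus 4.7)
The plan is to combine Theorem \ref{th5} with Theorem \ref{th4}(ii) in a direct way. Since $G$ is assumed to be K\"{o}nig-Egerv\'{a}ry, Theorem \ref{th5} tells us that every $S\in\Omega(G)$ is a critical independent set. In particular, the family $\Omega(G)$, which is finite, consists entirely of critical sets. The task then reduces to showing that taking the intersection or the union of a finite family of critical sets preserves criticality.

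For this, I would apply Theorem \ref{th4}(ii) inductively. Enumerate $\Omega(G)=\{S_{1},S_{2},\dots ,S_{k}\}$ and set $C_{1}=S_{1}$, $C_{j}=C_{j-1}\cap S_{j}$ for $2\le j\le k$; at each step, Theorem \ref{th4}(ii) guarantees that $C_{j}$ is critical because $C_{j-1}$ and $S_{j}$ are, so $C_{k}=\mathrm{core}(G)$ is critical. The exactly parallel induction with unions in place of intersections shows that $\mathrm{corona}(G)=S_{1}\cup\cdots\cup S_{k}$ is critical as well.

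There is no real obstacle here: the only thing worth double-checking is that Theorem \ref{th4}(ii) is phrased for critical sets without the extra assumption of independence on $A\cup B$ or $A\cap B$ (since the union of two independent sets need not be independent, as the example with $X\cup Y$ in the graph of Figure \ref{fig511} shows). This is precisely why the corollary only concludes that $\mathrm{corona}(G)$ is a \emph{critical set}, not a \emph{critical independent set}; $\mathrm{core}(G)$, being contained in each maximum independent set, is automatically independent and hence is actually a critical independent set, while $\mathrm{corona}(G)$ is in general only critical as a set. With this caveat noted, the proof is essentially a one-line application of the two cited theorems.
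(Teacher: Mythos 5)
Your proposal is correct and is exactly the argument the paper intends: the paper derives this corollary by "combining Theorems \ref{th5} and \ref{th4}\emph{(ii)}," i.e., every maximum independent set is critical and criticality is preserved under finite unions and intersections. Your added remark distinguishing critical sets from critical independent sets is accurate but not needed for the statement as given.
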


Let consider the graphs $G_{1}$ and $G_{2}$ from Figure \ref{fig22}:
$\mathrm{core}(G_{1})=\left\{  a,b,c,d\right\}  $ and it is a critical set,
while $\mathrm{core}(G_{2})=\left\{  x,y,z,w\right\}  $ and it is not critical.

\begin{theorem}
If $\mathrm{core}(G)$ is a critical set, then
\[
\mathrm{core}(G)\subseteq%
{\displaystyle\bigcap}
\left\{  A:A\text{ \textit{is an inclusion maximal critical independent set}%
}\right\}  .
\]

\end{theorem}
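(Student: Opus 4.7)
The plan is to show that if $A$ is any inclusion maximal critical independent set, then $A\cup\mathrm{core}(G)$ is also a critical independent set; since $A\subseteq A\cup\mathrm{core}(G)$, maximality will force $\mathrm{core}(G)\subseteq A$, giving the desired conclusion after intersecting over all such $A$.

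First, I would check that $A\cup\mathrm{core}(G)$ is independent. By Theorem \ref{th3}, the critical independent set $A$ extends to some $S\in\Omega(G)$. Since $\mathrm{core}(G)\subseteq T$ for every $T\in\Omega(G)$, we have $\mathrm{core}(G)\subseteq S$ as well, and hence $A\cup\mathrm{core}(G)\subseteq S$. Thus $A\cup\mathrm{core}(G)\in\mathrm{Ind}(G)$.

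Next I would use the hypothesis that $\mathrm{core}(G)$ is a critical set together with Theorem \ref{th4}\emph{(ii)}: since both $A$ and $\mathrm{core}(G)$ are critical in $G$, their union $A\cup\mathrm{core}(G)$ is critical, i.e., $d(A\cup\mathrm{core}(G))=d(G)=id(G)$. Combined with the previous step, $A\cup\mathrm{core}(G)$ is a critical independent set.

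Finally, since $A$ is inclusion maximal among critical independent sets and $A\subseteq A\cup\mathrm{core}(G)$, we must have $A=A\cup\mathrm{core}(G)$, which is exactly the inclusion $\mathrm{core}(G)\subseteq A$. As $A$ was an arbitrary inclusion maximal critical independent set, the result follows by intersection. There is no real obstacle here once one has Theorem \ref{th4}\emph{(ii)} and Theorem \ref{th3} in hand; the only subtle point is noticing that the hypothesis on $\mathrm{core}(G)$ is used precisely to justify the application of Theorem \ref{th4}\emph{(ii)} to the pair $A,\mathrm{core}(G)$.
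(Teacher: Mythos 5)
Your proposal is correct and follows essentially the same route as the paper's own proof: extend $A$ to a maximum independent set via Theorem \ref{th3} to get independence of $A\cup\mathrm{core}(G)$, apply Theorem \ref{th4}\emph{(ii)} (using the hypothesis that $\mathrm{core}(G)$ is critical) to get criticality, and then invoke inclusion maximality of $A$. Your closing remark correctly pinpoints where the hypothesis on $\mathrm{core}(G)$ enters the argument.
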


\begin{proof}
Let $A$ be an arbitrary inclusion maximal critical independent set. According
to Theorem \ref{th3}, there is some $S\in\Omega\left(  G\right)  $, such that
$A\subseteq S$. Since $\mathrm{core}(G)\subseteq S$, it follows that
$A\cup\mathrm{core}(G)\subseteq S$, and hence $A\cup\mathrm{core}(G)$ is
independent. By Theorem \ref{th4}, we get that $A\cup\mathrm{core}(G)$ is a
critical independent set. Since $A\subseteq A\cup\mathrm{core}(G)$ and $A$ is
an inclusion maximal critical independent set, it follows that $\mathrm{core}%
(G)\subseteq A$, for every such set $A$, and this completes the proof.
\end{proof}

\begin{remark}
By Theorem \ref{th3} the following inclusion holds for every graph $G$.
\[
\mathrm{corona}(G)\supseteq%
{\displaystyle\bigcup}
\left\{  A:A\text{ \textit{is an inclusion maximal critical independent set}%
}\right\}  .
\]

\end{remark}

\section{Structural properties of $\mathrm{\ker}\left(  G\right)  $}

Deleting a vertex from a graph may change its critical difference. For
instance, $d\left(  G-v_{1}\right)  =d\left(  G\right)  -1$, $d\left(
G-v_{13}\right)  =d\left(  G\right)  $, while $d\left(  G-v_{3}\right)
=d\left(  G\right)  +1$, where $G$ is the graph of Figure \ref{fig511}.

\begin{proposition}
\cite{LevMan2013a} For a vertex $v$ in a graph $G$, the following assertions hold:

\emph{(i) }$d\left(  G-v\right)  =d\left(  G\right)  -1$ if and only if
$v\in\mathrm{\ker}(G)$;

\emph{(ii)} if $v\in\mathrm{\ker}(G)$, then $\mathrm{\ker}(G-v)\subseteq
\mathrm{\ker}(G)-\left\{  v\right\}  $.
\end{proposition}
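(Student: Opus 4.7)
My plan is to first establish an auxiliary observation that drives both parts: \emph{if $v\in\mathrm{\ker}(G)$ and $S$ is any critical independent set of $G$, then $v\in S$ and $N_{G}(v)\subseteq N_{G}(S\setminus\{v\})$}. The containment $v\in S$ holds by the definition of $\mathrm{\ker}$. For the neighborhood containment I would use the identity $d_{G}(S\setminus\{v\})=d(G)-1+|N_{G}(v)\setminus N_{G}(S\setminus\{v\})|$. Since $S\setminus\{v\}$ is independent, its difference is at most $d(G)$, forcing $|N_{G}(v)\setminus N_{G}(S\setminus\{v\})|\leq 1$; and the value $1$ is impossible because then $S\setminus\{v\}$ would itself be a critical independent set of $G$ missing $v$, contradicting $v\in\mathrm{\ker}(G)$.

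The converse direction of (i) is immediate: if $v\notin\mathrm{\ker}(G)$, pick a critical independent set $I$ of $G$ not containing $v$; then $I$ is independent in $G-v$ with $|N_{G-v}(I)|\leq|N_{G}(I)|$, so $d_{G-v}(I)\geq d_{G}(I)=d(G)$, ruling out $d(G-v)=d(G)-1$. The $\geq$ half of the forward direction is equally short: applying the auxiliary observation to $S=\mathrm{\ker}(G)$ gives $d_{G-v}(\mathrm{\ker}(G)\setminus\{v\})=d(G)-1$, so $d(G-v)\geq d(G)-1$.

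The main obstacle is the reverse inequality in (i): one must rule out any independent $T\subseteq V(G)\setminus\{v\}$ with $d_{G-v}(T)\geq d(G)$. Since $d_{G}(T)$ equals $d_{G-v}(T)$ if $v\notin N_{G}(T)$ and equals $d_{G-v}(T)-1$ otherwise, and $d_{G}(T)\leq d(G)$, every possibility except $d_{G-v}(T)=d(G)$ with $v\in N_{G}(T)$ exhibits $T$ itself as a critical independent set of $G$ missing $v$, contradicting $v\in\mathrm{\ker}(G)$. The residual sub-case $d_{G}(T)=d(G)-1$ with $v\in N_{G}(T)$ is the delicate one; the plan is to apply Theorem~\ref{th4}(i) to $T$ and $K=\mathrm{\ker}(G)$---since $T\cap K$ is independent and misses $v$ it cannot be critical, so supermodularity pins down $d_{G}(T\cap K)=d(G)-1$ and $d_{G}(T\cup K)=d(G)$---and then combine this with the auxiliary containment $N_{G}(v)\subseteq N_{G}(K\setminus\{v\})$, tracing how a neighbor $t_{0}\in T\cap N_{G}(v)$ is connected to some $k\in K\setminus\{v\}$, to extract an independent refinement whose difference exceeds $d(G)$, contradicting the definition of $d(G)$.

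For (ii), once (i) is in hand the argument is short. Let $w\in\mathrm{\ker}(G-v)$; then $w\neq v$. For any critical independent set $S$ of $G$, the auxiliary observation shows $S\setminus\{v\}$ is independent in $G-v$ with $d_{G-v}(S\setminus\{v\})=d(G)-1=d(G-v)$ by (i), hence critical in $G-v$. Minimality of $\mathrm{\ker}(G-v)$ (Theorem~\ref{th4}(iii)) therefore forces $\mathrm{\ker}(G-v)\subseteq S\setminus\{v\}$, so $w\in S$ for every critical independent set $S$ of $G$; combined with $w\neq v$, this yields $w\in\mathrm{\ker}(G)\setminus\{v\}$.
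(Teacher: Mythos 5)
The paper does not actually prove this proposition---it is imported from \cite{LevMan2013a}---so your argument has to stand on its own. Most of it does: the auxiliary observation (with the identity $d_{G}(S\setminus\{v\})=d(G)-1+|N_{G}(v)\setminus N_{G}(S\setminus\{v\})|$), the contrapositive direction of (i), the inequality $d(G-v)\geq d(G)-1$, the reduction of the remaining inequality to the single sub-case $d_{G}(T)=d(G)-1$ with $v\in N_{G}(T)$, the supermodularity computation pinning down $d(T\cap K)=d(G)-1$ and $d(T\cup K)=d(G)$, and all of part (ii) are correct as written.

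The genuine gap is the punchline of that residual sub-case, which is exactly the hard step of the whole proposition and is left as an unexecuted ``plan.'' You never name the ``independent refinement,'' and the target you announce for it---``difference exceeds $d(G)$''---is unattainable by definition, since $d(G)=\max\{d(X):X\subseteq V(G)\}$ is a maximum over \emph{all} subsets; so the contradiction cannot come from exceeding $d(G)$, and the hinted path-tracing through $t_{0}$ and $k$ does not obviously produce anything. The argument can, however, be completed from what you already have, with a different contradiction: since $d(T\cup K)=d(G)$, the set $J=(T\cup K)\setminus N(T\cup K)$ is independent (two adjacent vertices of $T\cup K$ would each lie in $N(T\cup K)$) and satisfies $d(J)=d(G)$, because $N(J)\subseteq N(T\cup K)\setminus\bigl((T\cup K)\cap N(T\cup K)\bigr)$ --- a neighbor of $J$ inside $T\cup K$ would itself lie in $N(T\cup K)$ and hence outside $J$'s complement count --- so $J$ is a critical independent set of $G$ and must contain $v$. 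But $v\in N_{G}(T)\subseteq N(T\cup K)$ forces $v\notin J$, a contradiction. (With this observation the case split is not even needed: any independent $T\subseteq V(G)-v$ with $d_{G-v}(T)\geq d(G)$ has $d_{G}(T)\geq d(G)-1$, supermodularity gives $d(T\cup K)=d(G)$, and the same $J$ yields $v\notin N(T)$, whence $T$ is a critical independent set of $G$ missing $v$.) As it stands, though, the decisive step is missing and the stated strategy for it would fail.
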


Note that $\mathrm{\ker}(G-v)$ may differ from $\mathrm{\ker}(G)-\left\{
v\right\}  $. For example, $\mathrm{\ker}(K_{3,2})$ is equal to the partite
set of size $3$, but $\mathrm{\ker}(K_{3,2}-v)=\emptyset$ whenever $v$ is in
that set. Also, if $G=C_{4}$, then $\mathrm{\ker}(G)-\left\{  v\right\}
=\emptyset-\left\{  v\right\}  =\emptyset$, while $\mathrm{\ker}%
(G-v)=N_{G}(v)$ for every $v\in V\left(  G\right)  $.

\begin{theorem}
\label{th2}\cite{Larson2007} There is a matching from $N(S)$ into $S$ for
every critical independent set $S$.
\end{theorem}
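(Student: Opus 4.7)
The plan is to invoke Hall's marriage theorem on the natural bipartite structure between $N(S)$ and $S$. Since $S$ is independent, every edge incident to $N(S)$ that lies inside $N[S]$ goes to $S$. A matching from $N(S)$ into $S$ exists iff Hall's condition holds on the $N(S)$ side: for every $T\subseteq N(S)$, the set $W_T := N(T)\cap S$ satisfies $|W_T|\ge |T|$.

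Next I would verify Hall's condition by contradiction, using criticality of $S$. Suppose some $T\subseteq N(S)$ violates Hall, so $|W_T|<|T|$. Form $S' := S\setminus W_T$. Then $S'$ is still independent (being a subset of $S$), so $d(S')\le id(G)=d(S)$. On the other hand, every vertex of $T$ has, by definition of $W_T$, all of its $S$-neighbors inside $W_T$; therefore no vertex of $T$ has a neighbor in $S'$, giving $N(S')\subseteq N(S)\setminus T$. Combining,
\[
d(S') \;=\; |S'|-|N(S')| \;\ge\; (|S|-|W_T|)-(|N(S)|-|T|) \;=\; d(S)+(|T|-|W_T|) \;>\; d(S),
\]
contradicting $d(S)=id(G)$. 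Hence Hall's condition holds on $N(S)$, and a matching from $N(S)$ into $S$ exists.

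The only mild subtlety is the inclusion $N(S')\subseteq N(S)\setminus T$, which must be justified from the definition of $W_T$: a vertex $t\in T$ has neighbors in $S$ only in $W_T=N(T)\cap S$, so removing $W_T$ eliminates every edge from $t$ to $S$, whence $t\notin N(S')$. No other step requires care: the supermodularity of $d$ is not even needed here, only the maximality of $d(S)$ among independent sets.
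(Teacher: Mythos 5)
Your argument is correct and complete. The paper itself gives no proof of this statement (it is quoted from Larson's paper), so there is nothing to compare against line by line; but your Hall's-theorem argument is exactly the standard route: the bipartite auxiliary graph between $N(S)$ and $S$ is legitimate because $S$ is independent (so $N(S)\cap S=\emptyset$), the key inclusion $N(S')\subseteq N(S)\setminus T$ is justified precisely as you say, and the resulting inequality $d(S')\ge d(S)+(|T|-|W_T|)>d(S)$ contradicts $d(S)=id(G)$ since $S'$ is independent. No gaps.
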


In the graph $G$ of Figure \ref{fig511}, let $S=\{v_{1},v_{2},v_{3}\}$. By
Theorem \ref{th2}, there is a matching from $N\left(  S\right)  $ into
$S=\left\{  v_{1},v_{2},v_{3}\right\}  $, for instance, $M=\left\{  v_{2}%
v_{5},v_{3}v_{4}\right\}  $, since $S$ is critical independent. On the other
hand, there is no matching from $N\left(  S\right)  $ into $S-v_{3}$.

\begin{theorem}
\cite{LevMan2013a}\label{th9} For a critical independent set $A$ in a graph
$G$, the following statements are equivalent:

\emph{(i) }$A=\mathrm{\ker}(G)$;

\emph{(ii)} there is no set $B\subseteq$ $N\left(  A\right)  ,B\neq\emptyset$
such that $\left\vert N\left(  B\right)  \cap A\right\vert =\left\vert
B\right\vert $;

\emph{(iii) }for each $v\in A$ there exists a matching from $N\left(
A\right)  $ into $A-v$.
\end{theorem}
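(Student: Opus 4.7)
The plan is to prove the three-way equivalence via the cycle (i) $\Rightarrow$ (ii) $\Rightarrow$ (iii) $\Rightarrow$ (i). Two ingredients will recur: the minimality characterization of $\mathrm{\ker}(G)$ coming from Theorem~\ref{th4}, which lets us translate ``$A = \mathrm{\ker}(G)$'' into ``no critical independent proper subset of $A$ exists'', and Hall's/K\"onig's theorem applied to the bipartite graph between $N(A)$ and $A$, where Theorem~\ref{th2} already guarantees a matching saturating $N(A)$.

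For (i) $\Rightarrow$ (ii), I would argue contrapositively: given a nonempty $B \subseteq N(A)$ with $|N(B) \cap A| = |B|$, produce a critical independent set $A' \subsetneq A$. The natural candidate is $A' := A \setminus (N(B) \cap A)$, which is independent as a subset of $A$ and strictly smaller since $B \neq \emptyset$ forces $N(B) \cap A \neq \emptyset$. Criticality follows from the observation $N(A') \subseteq N(A) \setminus B$ (every $b \in B$ has all of its $A$-neighbors in $N(B) \cap A$, hence none in $A'$), so $d(A') \geq (|A| - |B|) - (|N(A)| - |B|) = d(A) = d(G)$. Then $A$ cannot be the unique minimal critical independent set given by Theorem~\ref{th4}.

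For (ii) $\Rightarrow$ (iii), Theorem~\ref{th2} provides a matching from $N(A)$ into $A$, so by K\"onig/Hall $|N(B) \cap A| \geq |B|$ holds for every $B \subseteq N(A)$; hypothesis (ii) sharpens this to $|N(B) \cap A| \geq |B| + 1$ whenever $B \neq \emptyset$. Then for any $v \in A$ and any $B \subseteq N(A)$ one has $|N(B) \cap (A - v)| \geq |N(B) \cap A| - 1 \geq |B|$, so Hall's condition still holds and a matching from $N(A)$ into $A - v$ exists.

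For (iii) $\Rightarrow$ (i), I again go contrapositively: assume $A \supsetneq \mathrm{\ker}(G)$ and exhibit some $v \in A$ for which no matching from $N(A)$ into $A - v$ exists. Set $A^* := \mathrm{\ker}(G)$. Since both $A$ and $A^*$ are critical, $|N(A)| - |N(A^*)| = |A| - |A^*|$, and because $A^* \subseteq A$ yields $N(A^*) \subseteq N(A)$, the set $D := N(A) \setminus N(A^*)$ has $|D| = |A \setminus A^*| > 0$. Each $d \in D$ has its $A$-neighbors confined to $A \setminus A^*$, so $N(D) \cap A \subseteq A \setminus A^*$, forcing $|N(D) \cap A| \leq |D|$; combined with the lower bound $\geq |D|$ from Theorem~\ref{th2}, equality holds and $N(D) \cap A$ is nonempty. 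Picking any $v \in N(D) \cap A$ yields $|N(D) \cap (A - v)| = |D| - 1$, violating Hall's condition for $A - v$. The main obstacle is identifying $D$ as the correct ``tight'' set: the argument hinges on the observation that for two critical sets with $A^* \subsetneq A$ the neighborhood growth $|N(A) \setminus N(A^*)|$ matches the size growth $|A \setminus A^*|$ exactly, and this tightness is concentrated on vertices within $A \setminus A^*$.
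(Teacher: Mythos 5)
Your argument is correct. Note that the paper itself states Theorem~\ref{th9} without proof, citing \cite{LevMan2013a}, so there is no in-paper proof to compare against; your cyclic argument (i)$\Rightarrow$(ii)$\Rightarrow$(iii)$\Rightarrow$(i) stands on its own. Each step checks out: in (i)$\Rightarrow$(ii) the set $A'=A\setminus\left(N(B)\cap A\right)$ is indeed a strictly smaller critical independent set, since $N(A')\subseteq N(A)\setminus B$ gives $d(A')\geq d(A)=d(G)$, contradicting the minimality of $\mathrm{\ker}(G)$ from Theorem~\ref{th4}\emph{(iii)}; in (ii)$\Rightarrow$(iii) the combination of Theorem~\ref{th2} with the absence of tight sets upgrades Hall's condition to $\left\vert N(B)\cap A\right\vert\geq\left\vert B\right\vert+1$, which survives the deletion of one vertex of $A$; and in (iii)$\Rightarrow$(i) the set $D=N(A)\setminus N\left(\mathrm{\ker}(G)\right)$ is correctly identified as a nonempty tight set (using $\mathrm{\ker}(G)\subseteq A$, which holds because $A$ is critical independent), whose tightness produces a Hall violator for $A-v$ once $v$ is chosen in $N(D)\cap A$. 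The edge case $\mathrm{\ker}(G)=\emptyset$ also goes through with the convention $d(\emptyset)=0$ adopted in the paper.
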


The graphs $G_{1}$ and $G_{2}$ in Figure \ref{fig333} satisfy $\mathrm{\ker
}(G_{1})=\mathrm{core}(G_{1})$, $\mathrm{\ker}(G_{2})=\left\{  x,y,z\right\}
\subset\mathrm{core}(G_{2})$, and both $\mathrm{core}(G_{1})$ and
$\mathrm{core}(G_{2})$\ are critical sets of maximum size. The graph $G_{3}$
in Figure \ref{fig333} has $\mathrm{\ker}(G_{3})=\{u,v\}$, the set
$\{t,u,v\}$\ as a critical independent set of maximum size, while
$\mathrm{core}(G_{3})=\left\{  t,u,v,w\right\}  $\ is not a critical set.

\begin{figure}[h]
\setlength{\unitlength}{1.0cm} \begin{picture}(5,1.9)\thicklines
\multiput(1,0.5)(1,0){3}{\circle*{0.29}}
\multiput(2,1.5)(1,0){2}{\circle*{0.29}}
\put(1,0.5){\line(1,0){2}}
\put(2,0.5){\line(0,1){1}}
\put(2,0.5){\line(1,1){1}}
\put(3,0.5){\line(0,1){1}}
\put(1,0.84){\makebox(0,0){$a$}}
\put(1.7,1.5){\makebox(0,0){$b$}}
\put(2,0){\makebox(0,0){$G_{1}$}}
\multiput(4,0.5)(1,0){4}{\circle*{0.29}}
\put(4,0.5){\line(1,0){3}}
\multiput(4,1.5)(1,0){4}{\circle*{0.29}}
\put(4,1.5){\line(1,-1){1}}
\put(5,0.5){\line(0,1){1}}
\put(7,0.5){\line(0,1){1}}
\put(7,0.5){\line(-1,1){1}}
\put(6,1.5){\line(1,0){1}}
\put(6,0.84){\makebox(0,0){$q$}}
\put(4,0.84){\makebox(0,0){$x$}}
\put(4.27,1.5){\makebox(0,0){$y$}}
\put(5.27,1.5){\makebox(0,0){$z$}}
\put(5.5,0){\makebox(0,0){$G_{2}$}}
\multiput(8,0.5)(1,0){6}{\circle*{0.29}}
\multiput(8,1.5)(1,0){6}{\circle*{0.29}}
\put(8,0.5){\line(1,1){1}}
\put(8,1.5){\line(1,0){1}}
\put(9,0.5){\line(0,1){1}}
\put(9,0.5){\line(1,0){4}}
\put(10,0.5){\line(0,1){1}}
\put(10,0.5){\line(1,1){1}}
\put(10,1.5){\line(1,-1){1}}
\put(10,1.5){\line(1,0){1}}
\put(11,0.5){\line(0,1){1}}
\put(12,1.5){\line(1,-1){1}}
\put(12,1.5){\line(1,0){1}}
\put(13,0.5){\line(0,1){1}}
\put(8.3,0.5){\makebox(0,0){$v$}}
\put(8,1.15){\makebox(0,0){$u$}}
\put(9.3,0.84){\makebox(0,0){$t$}}
\put(12,0.84){\makebox(0,0){$w$}}
\put(10.5,0){\makebox(0,0){$G_{3}$}}
\end{picture}\caption{$\mathrm{core}(G_{1})=\left\{  a,b\right\}  $,
$\mathrm{core}(G_{2})=\left\{  q,x,y,z\right\}  $, $\mathrm{core}%
(G_{3})=\left\{  t,u,v,w\right\}  $.}%
\label{fig333}%
\end{figure}

An independent set $S$ is \textit{inclusion minimal with} $d\left(  S\right)
>0$ if no proper subset of $S$ has positive difference. For example, in Figure
\ref{fig333} one can see that $\mathrm{\ker}(G_{1})$ is an inclusion minimal
independent set with positive difference, while for the graph $G_{2}$ the sets
$\{x,y\},\{x,z\},\{y,z\}$ are inclusion minimal independent with positive
difference, and $\mathrm{\ker}(G_{2})=\{x,y\}\cup\{x,z\}\cup\{y,z\}$.

\begin{theorem}
\cite{LevMan2013a}\label{th1} If $\mathrm{\ker}(G)\neq\emptyset$, then
\begin{align*}
\mathrm{\ker}(G)  &  =%
{\displaystyle\bigcup}
\left\{  S_{0}:S_{0}\text{ is an inclusion minimal independent set with
}d\left(  S_{0}\right)  =1\right\} \\
&  =%
{\displaystyle\bigcup}
\left\{  S_{0}:S_{0}\text{ is an inclusion minimal independent set with
}d\left(  S_{0}\right)  >0\right\}  .
\end{align*}

\end{theorem}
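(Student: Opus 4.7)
Let $U_1$ and $U_{>0}$ denote the two unions on the right hand side; trivially $U_1\subseteq U_{>0}$. My plan is to prove $U_{>0}\subseteq\ker(G)$ (simultaneously forcing $U_{>0}=U_1$) and $\ker(G)\subseteq U_1$, which together with $U_1\subseteq U_{>0}$ closes the chain.

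The first half rests on a lemma I would establish first: every inclusion minimal independent set $S_0$ with $d(S_0)>0$ satisfies $d(S_0)=1$ and $S_0\subseteq\ker(G)$. For $d(S_0)=1$, pick any $v\in S_0$; minimality gives $d(S_0-v)\leq 0$, and $N(S_0-v)\subseteq N(S_0)$ then yields $|N(S_0)|\geq |S_0|-1$, i.e.\ $d(S_0)\leq 1$. For $S_0\subseteq\ker(G)$, assume $v\in S_0\setminus\ker(G)$ and choose a critical independent set $A$ with $v\notin A$; Theorem \ref{th4}(i) combined with $d(A)=d(G)\geq d(S_0\cup A)$ gives $d(S_0\cap A)\geq d(S_0)>0$, while $S_0\cap A$ is independent and strictly smaller than $S_0$, contradicting minimality.

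For $\ker(G)\subseteq U_1$, fix $v\in\ker(G)$ and invoke Theorem \ref{th9}(iii) to obtain a matching $M_v$ from $N(\ker(G))$ into $\ker(G)-v$. In the bipartite graph on $\ker(G)\cup N(\ker(G))$ with $G$-edges between the parts, let $D_v$ contain $v$ together with every $\ker(G)$-vertex reachable from $v$ by an $M_v$-alternating path. Standard alternating-paths bookkeeping will show that $M_v$ restricts to a bijection $N(D_v)\to D_v-\{v\}$, hence $d(D_v)=1$, and $D_v$ is independent as a subset of $\ker(G)$.

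The main obstacle is inclusion minimality of $D_v$. Suppose $S'\subsetneq D_v$ is independent with $d(S')>0$. For each $x\in S'-\{v\}$, the unique $y$ with $M_v(y)=x$ satisfies $y\in N(x)\subseteq N(S')$, and distinct $x$'s give distinct $y$'s, so $|N(S')|\geq |S'-\{v\}|$. If $v\notin S'$, this already contradicts $d(S')>0$. If $v\in S'$, the bound combined with $d(S')\geq 1$ forces $N(S')=\{y:M_v(y)\in S'-\{v\}\}$. I would then take any $x^{*}\in D_v\setminus S'$ and walk an alternating path $v=x_0,y_1,x_1,\ldots,y_k,x_k=x^{*}$, showing inductively that each $x_i$ lies in $S'$: once $x_{i-1}\in S'$, we have $y_i\in N(x_{i-1})\subseteq N(S')$, so the displayed equality gives $x_i=M_v(y_i)\in S'-\{v\}\subseteq S'$. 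This yields the contradiction $x^{*}\in S'$, completing the proof.
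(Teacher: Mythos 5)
Your argument is correct. Note that the paper itself gives no proof of Theorem \ref{th1} (it is imported from \cite{LevMan2013a}), so there is nothing internal to compare against; your write-up is a self-contained derivation from the cited tools. Both halves check out: the supermodularity step ($d(S_0\cap A)\geq d(S_0)+d(A)-d(S_0\cup A)\geq d(S_0)>0$ for a critical independent $A$ avoiding a putative $v\in S_0\setminus\ker(G)$, contradicting minimality) cleanly gives $U_{>0}\subseteq\ker(G)$, and the alternating-path set $D_v$ built from the matching of Theorem \ref{th9}\emph{(iii)} does satisfy $|N(D_v)|=|D_v|-1$ with the minimality argument via $N(S')=M_v^{-1}(S'-\{v\})$ closing correctly. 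In effect you are also reproving Proposition \ref{prop3} along the way, which you could instead have quoted. One small point: the inclusion $U_1\subseteq U_{>0}$ is not quite ``trivial'' --- a set that is inclusion minimal with $d=1$ could a priori have a proper subset of difference $\geq 2$. It does hold, because $d(T-v)\geq d(T)-1$ (the same inequality $N(T-v)\subseteq N(T)$ you use to get $d(S_0)=1$), so any subset with $d\geq 2$ contains a further subset with $d=1$, contradicting minimality; you should say this in one line rather than wave at it, since your chain $U_{>0}\subseteq\ker(G)\subseteq U_1\subseteq U_{>0}$ genuinely needs it.
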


In a graph $G$, the union of all minimum cardinality independent sets $S$ with
$d\left(  S\right)  >0$ may be a proper subset of $\mathrm{\ker}\left(
G\right)  $. For example, consider the graph $G$ in Figure \ref{fig177}, where
$\left\{  x,y\right\}  \subset\mathrm{\ker}\left(  G\right)  =\left\{
x,y,u,v,w\right\}  $.

\begin{figure}[h]
\setlength{\unitlength}{1cm}\begin{picture}(5,1.3)\thicklines
\multiput(4,0)(1,0){3}{\circle*{0.29}}
\multiput(3,1)(1,0){5}{\circle*{0.29}}
\put(4,0){\line(1,0){2}}
\put(4,0){\line(0,1){1}}
\put(3,1){\line(1,-1){1}}
\put(5,0){\line(0,1){1}}
\put(5,0){\line(2,1){2}}
\put(6,0){\line(0,1){1}}
\put(6,0){\line(1,1){1}}
\put(2.7,1){\makebox(0,0){$x$}}
\put(3.7,1){\makebox(0,0){$y$}}
\put(4.7,1){\makebox(0,0){$u$}}
\put(5.7,1){\makebox(0,0){$v$}}
\put(7.3,1){\makebox(0,0){$w$}}
\put(2,0.5){\makebox(0,0){$G$}}
\multiput(9,0)(1,0){3}{\circle*{0.29}}
\multiput(9,1)(2,0){2}{\circle*{0.29}}
\put(9,0){\line(1,0){2}}
\put(9,1){\line(1,-1){1}}
\put(10,0){\line(1,1){1}}
\put(8.65,0){\makebox(0,0){$v_{1}$}}
\put(8.65,1){\makebox(0,0){$v_{2}$}}
\put(11.35,1){\makebox(0,0){$v_{3}$}}
\put(11.35,0){\makebox(0,0){$v_{4}$}}
\put(8,0.5){\makebox(0,0){$H$}}
\end{picture}\caption{Both $S_{1}=\{x,y\}$ and $S_{2}=\{u,v,w\}$ are inclusion
minimal independent sets satisfying $d\left(  S\right)  >0$.}%
\label{fig177}%
\end{figure}
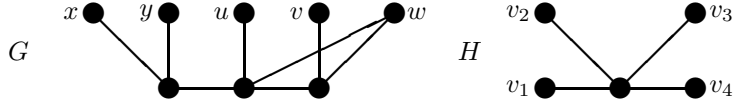

Actually, all inclusion minimal independent sets $S$ with $d(S)>0$ are of the
same difference.

\begin{proposition}
\cite{LevMan2013a}\label{prop3} If $S_{0}$ is an inclusion minimal independent
set with $d\left(  S_{0}\right)  >0$, then $d\left(  S_{0}\right)  =1$. In
other words,%
\begin{gather*}
\left\{  S_{0}:S_{0}\text{ is an inclusion minimal independent set with
}d\left(  S_{0}\right)  >0\right\}  =\\
=\left\{  S_{0}:S_{0}\text{ is an inclusion minimal independent set with
}d\left(  S_{0}\right)  =1\right\}  .
\end{gather*}

\end{proposition}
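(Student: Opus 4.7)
The plan is to argue by contradiction: suppose $S_{0}$ is an inclusion minimal independent set with $d(S_{0})\geq 2$, and produce a proper independent subset of $S_{0}$ whose difference is still positive, contradicting minimality.

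The key observation is the monotonicity of the neighborhood function. For any $v\in S_{0}$, the set $S_{0}-v$ is still independent, and $N(S_{0}-v)\subseteq N(S_{0})$. In fact, $N(S_{0})\setminus N(S_{0}-v)$ consists precisely of those vertices in $N(S_{0})$ whose only neighbor in $S_{0}$ is $v$ (the ``private neighbors'' of $v$ with respect to $S_{0}$). Writing $p(v)$ for the number of such private neighbors, one obtains
\[
d(S_{0}-v)=(|S_{0}|-1)-(|N(S_{0})|-p(v))=d(S_{0})-1+p(v).
\]
Since $p(v)\geq 0$, we always have $d(S_{0}-v)\geq d(S_{0})-1$.

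Now if the hypothesis $d(S_{0})\geq 2$ holds, then for \emph{every} choice of $v\in S_{0}$ we get $d(S_{0}-v)\geq 1>0$. Pick any $v\in S_{0}$ (which exists since $d(S_{0})>0$ forces $S_{0}\neq\emptyset$): the set $S_{0}-v$ is a proper subset of $S_{0}$ which is independent and has positive difference. This contradicts the assumption that $S_{0}$ is inclusion minimal among independent sets of positive difference. Hence $d(S_{0})=1$, which also yields the set equality stated in the proposition.

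I do not expect a real obstacle here: the argument is a one-step monotonicity calculation, and the only thing to be careful about is the accounting of how $|N(S_{0})|$ can shrink when a single vertex is removed (the ``private neighbor'' bookkeeping), which only helps the inequality $d(S_{0}-v)\geq d(S_{0})-1$. No appeal to König--Egerv\'ary structure, matchings, or earlier theorems is needed; this is purely a property of the supermodular-like behavior of $d$ on independent sets under single-vertex deletion.
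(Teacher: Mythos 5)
Your argument is correct. The paper itself states Proposition \ref{prop3} only with a citation to an external reference and gives no proof, so there is nothing internal to compare against; but your single-vertex-deletion computation is sound: since $S_{0}$ is independent, $N(S_{0}-v)\subseteq N(S_{0})$ and the deficit is exactly the number $p(v)$ of neighbors of $v$ having no other neighbor in $S_{0}$, giving $d(S_{0}-v)=d(S_{0})-1+p(v)\geq d(S_{0})-1$, so $d(S_{0})\geq 2$ would make $S_{0}-v$ a proper (independent) subset of positive difference, contradicting minimality. The only hypotheses you use are that subsets of independent sets are independent and that $d(S_{0})>0$ forces $S_{0}\neq\emptyset$, both of which you address, so the proof is complete as written.
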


The converse of Proposition \ref{prop3} is not true. For instance, $S=\left\{
x,y,u\right\}  $ is independent in the graph $G$ of Figure \ref{fig177}\ and
$d\left(  S\right)  =1$, but $S$ is not minimal with this property.

\begin{proposition}
\cite{LevMan2013a} $\min\left\{  \left\vert S_{0}\right\vert :d\left(
S_{0}\right)  >0,S_{0}\in\mathrm{Ind}(G)\right\}  \leq\left\vert \mathrm{\ker
}\left(  G\right)  \right\vert -d\left(  G\right)  +1$ is true for every graph
$G$.
\end{proposition}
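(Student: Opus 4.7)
The plan is to produce an explicit small independent set $S_0$ with $d(S_0)>0$, taken as a suitable subset of $\mathrm{\ker}(G)$ itself, since $\mathrm{\ker}(G)$ is the natural reservoir of independent sets with large difference and we only need to cut it down in size.

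First I would record, using Theorem \ref{th4}(iii), that $\mathrm{\ker}(G)$ is itself a critical independent set, so $d(\mathrm{\ker}(G))=id(G)=d(G)$. Writing $A=\mathrm{\ker}(G)$ and $k=d(G)$, this says $|N(A)|=|A|-k$. Next I would pick any subset $T\subseteq A$ of cardinality $|A|-k+1$ (possible whenever $k\geq 1$, which is the only case in which the stated inequality is not vacuous). Such a $T$ is independent because $A$ is, and $N(T)\subseteq N(A)$, so
\[
|N(T)|\leq |N(A)| = |A|-k = |T|-1,
\]
which gives $d(T)=|T|-|N(T)|\geq 1>0$. Therefore the minimum on the left-hand side is at most $|T|=|\mathrm{\ker}(G)|-d(G)+1$, as required.

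I do not expect a real obstacle here: the whole argument rests on the elementary observation that $N(T)\subseteq N(A)$ for any $T\subseteq A$, which combined with the identity $|N(\mathrm{\ker}(G))|=|\mathrm{\ker}(G)|-d(G)$ forces every subset of $\mathrm{\ker}(G)$ of size $|\mathrm{\ker}(G)|-d(G)+1$ to have strictly positive difference. The only subtlety worth flagging is the degenerate case $d(G)=0$, in which $\mathrm{\ker}(G)=\emptyset$ and the left-hand side is a minimum over the empty family; there the inequality is read either vacuously or under the implicit restriction $d(G)\geq 1$.
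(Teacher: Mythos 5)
Your argument is correct. The paper itself states this proposition without proof (it is imported from \cite{LevMan2013a}), so there is nothing internal to compare against, but your reasoning stands on its own: by Theorem \ref{th4}\emph{(iii)}, $\mathrm{\ker}(G)$ is a critical independent set, hence $|N(\mathrm{\ker}(G))|=|\mathrm{\ker}(G)|-d(G)$, and any subset $T\subseteq\mathrm{\ker}(G)$ of size $|\mathrm{\ker}(G)|-d(G)+1$ satisfies $|N(T)|\le|N(\mathrm{\ker}(G))|=|T|-1$, so $d(T)\ge 1$. Note also that such a $T$ is nonempty since $|\mathrm{\ker}(G)|\ge d(G)$. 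This is arguably more elementary than the route suggested by the surrounding results of the paper (which would take a minimum-cardinality $S_{0}$ with $d(S_{0})>0$, invoke Proposition \ref{prop3} to get $d(S_{0})=1$ and Theorem \ref{th1} to get $S_{0}\subseteq\mathrm{\ker}(G)$, and then bound $|S_{0}|=|N(S_{0})|+1\le|N(\mathrm{\ker}(G))|+1$); your construction avoids both of those auxiliary results. You are right to flag the degenerate case $d(G)=0$, where $\mathrm{\ker}(G)=\emptyset$ and the minimum ranges over an empty family; the inequality is only meaningful for $d(G)\ge 1$, which is the regime your construction covers.
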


\section{Relationships between $\mathrm{\ker}\left(  G\right)  $ and
$\mathrm{core}(G)$}

Let us consider again the graph $G_{2}$ from Figure \ref{fig22}:
$\mathrm{core}(G_{2})=\left\{  x,y,z,w\right\}  $ and it is not critical, but
$\mathrm{\ker}\left(  G_{2}\right)  =\left\{  x,y,z\right\}  \subseteq
\mathrm{core}(G_{2})$. Clearly, the same inclusion holds for $G_{1}$, whose
$\mathrm{core}(G_{1})$ is a critical set.

\begin{theorem}
\label{th6}\cite{LevMan2012a} For every graph $G$, $\mathrm{\ker}%
(G)\subseteq\mathrm{core}(G)$.
\end{theorem}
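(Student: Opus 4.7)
The plan is to fix an arbitrary $S \in \Omega(G)$, set $A = \mathrm{\ker}(G)$, and prove that $B = A \cap S$ is itself a critical independent set. Once $B$ is critical, the definition of $\mathrm{\ker}(G)$ as an intersection immediately gives $A = \mathrm{\ker}(G) \subseteq B$, which together with $B \subseteq A$ forces $A = B \subseteq S$. The case $A = \emptyset$ is trivial, so I may assume $A \neq \emptyset$; this already entails $id(G) > 0$, since otherwise $\emptyset$ would be a critical independent set and we would have $\mathrm{\ker}(G) = \emptyset$.

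The inequality $d(B) \leq id(G) = d(A)$ is automatic from $B \in \mathrm{Ind}(G)$, so the heart of the argument is to prove $d(B) \geq d(A)$. Writing $A_{1} = A \setminus S$, this is equivalent to $|N(A_{1}) \setminus N(B)| \geq |A_{1}|$. Since $B \subseteq S$ and $S$ is independent, $N(B) \cap S = \emptyset$, so $N(A_{1}) \cap (S \setminus A) \subseteq N(A_{1}) \setminus N(B)$; it therefore suffices to exhibit an injective assignment $v \mapsto w_{v}$ sending each $v \in A_{1}$ to a neighbor $w_{v} \in S \setminus A$, i.e., a matching of $G$ saturating $A_{1}$ into $S \setminus A$.

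I would produce the matching via Hall's theorem. For any $T \subseteq A_{1}$, independence of $A$ gives $N(T) \cap A = \emptyset$, hence $N(T) \cap S \subseteq S \setminus A$. The set $S' = (S \setminus N(T)) \cup T$ is then independent: the only possible new edges would join some $u \in T$ to some $v \in S \setminus N(T)$, but then $v \in N(u) \subseteq N(T)$, a contradiction. Since $T \cap S = \emptyset$, we have $|S'| = |S| - |N(T) \cap S| + |T|$, and maximality of $S$ forces $|N(T) \cap S| \geq |T|$, which is Hall's condition. Hall's theorem now delivers the required matching, and the computation above shows that $B$ is critical.

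The main obstacle is precisely this Hall's step. A direct attempt via the supermodularity of $d$ (Theorem \ref{th4}(i)) applied to $A$ and $S$ only yields $d(A \cap S) + d(A \cup S) \geq d(A) + d(S)$, and one cannot bound $d(S)$ below by $id(G)$ in general, since a maximum independent set need not be critical outside the K\"{o}nig-Egerv\'{a}ry class (Theorem \ref{th5}). The exchange argument above bypasses this by exploiting the maximality of $S$ directly, and once $B$ is critical the minimality characterization from Theorem \ref{th4}(iii) closes the proof with no further case analysis.
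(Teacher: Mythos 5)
Your proof is correct. Note first that the paper does not actually prove Theorem \ref{th6}; it only cites \cite{LevMan2012a}, so there is no in-paper argument to match against. Your route is a clean, self-contained one: you show that for $A=\ker(G)$ and any $S\in\Omega(G)$ the trace $B=A\cap S$ satisfies $d(B)\geq d(A)$, hence is itself a critical independent set, and then the definition of $\ker(G)$ as an intersection forces $A=B\subseteq S$. The reduction $d(B)\geq d(A)\iff |N(A_1)\setminus N(B)|\geq |A_1|$ with $A_1=A\setminus S$ is right (using $N(B)\subseteq N(A)$ and $N(A)\setminus N(B)=N(A_1)\setminus N(B)$), the observation $N(B)\cap S=\emptyset$ is right, and the exchange argument $\left|N(T)\cap S\right|\geq |T|$ for $T\subseteq A_1$ via the independent set $(S\setminus N(T))\cup T$ and maximality of $S$ is the standard and correct way to get the bound. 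Two small remarks: (a) Hall's theorem is superfluous — the single instance $T=A_1$ of your exchange inequality already gives $|N(A_1)\cap(S\setminus A)|\geq |A_1|$, which is all the counting you need, so no matching has to be constructed; (b) your argument never uses anything special about $\ker(G)$ beyond its being a critical independent set, so you have in fact proved the stronger statement that $A\cap S$ is critical for \emph{every} critical independent set $A$ and every $S\in\Omega(G)$, which gives $\ker(G)\subseteq A\cap S$ for all such pairs and in particular recovers both Theorem \ref{th6} and the inclusion $\mathrm{diadem}(G)\subseteq\mathrm{corona}(G)$-type reasoning via Theorem \ref{th3}. Your closing observation that supermodularity applied to $A$ and $S$ is useless here (because $d(S)$ cannot be bounded below outside the K\"{o}nig-Egerv\'{a}ry class) is accurate and shows you identified the right obstruction.
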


Let $I_{c}$ be a maximum critical independent set of $G$, and $X=I_{c}\cup
N(I_{c})$. In \cite{Short} it is proved that $\mathrm{core}(G\left[  X\right]
)\subseteq\mathrm{core}(G)$. Moreover, in \cite{LevMan2012a}, we showed that
the chain of relationships $\mathrm{\ker}(G)=\mathrm{\ker}(G\left[  X\right]
)\subseteq\mathrm{core}(G\left[  X\right]  )\subseteq\mathrm{core}(G)$ holds
for every graph $G$. Theorem \ref{th6} allows an alternative proof of the
following inequality due to Lorentzen.

\begin{corollary}
\label{cor1}\cite{Lorentzen1966,Schrijver2003,LevMan2012a} The inequality
$d\left(  G\right)  \geq\alpha\left(  G\right)  -\mu\left(  G\right)  $ holds
for every graph.
\end{corollary}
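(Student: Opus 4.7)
The plan is to build, for any $S \in \Omega(G)$ containing $K := \mathrm{\ker}(G)$ (such an $S$ exists by Theorem~\ref{th6}), an explicit matching of $G$ of size $\alpha(G) - d(G)$; this immediately gives $\mu(G) \geq \alpha(G) - d(G)$, which is the desired inequality.

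Since $K$ is itself a critical independent set, $d(G) = |K| - |N(K)|$, and Theorem~\ref{th2} supplies a matching $M_1$ from $N(K)$ into $K$ of size $|N(K)|$. To handle the remainder of $S$, I would turn to the bipartite graph $B'$ with parts $S \setminus K$ and $V(G) \setminus (S \cup N(K))$ and edges inherited from $G$. The heart of the argument is verifying Hall's condition in $B'$: for any $T \subseteq S \setminus K$, the set $T \cup K$ is independent in $G$, so the maximality of $d(G)$ forces
\[
|T| + |K| - |N(T) \cup N(K)| = d(T \cup K) \leq d(G) = |K| - |N(K)|,
\]
and since $N_G(T) \subseteq V(G) \setminus S$ (because $T \subseteq S$ is independent), a rearrangement yields $|N_{B'}(T)| = |N_G(T) \setminus N(K)| \geq |T|$. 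Hall's theorem then provides a matching $M_2$ in $B'$ saturating $S \setminus K$, necessarily of size $\alpha(G) - |K|$.

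The two matchings are vertex-disjoint, since $M_1$ lives inside $K \cup N(K)$ while $M_2$ lives inside $(S \setminus K) \cup (V(G) \setminus (S \cup N(K)))$; hence $M_1 \cup M_2$ is a matching of $G$ of size $|N(K)| + (\alpha(G) - |K|) = \alpha(G) - d(G)$, which finishes the proof. The main obstacle is the Hall verification inside $B'$, which is precisely where Theorem~\ref{th6} becomes essential: embedding $K$ into some maximum independent set $S$ is what guarantees that $T \cup K$ is independent for every $T \subseteq S \setminus K$, and it is also what keeps the vertex sets used by $M_1$ and $M_2$ disjoint.
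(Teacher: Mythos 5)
Your proof is correct. Note first that the paper does not actually write out a proof of Corollary \ref{cor1}: it states the inequality as a corollary, citing the literature and remarking only that Theorem \ref{th6} ``allows an alternative proof,'' so there is no line-by-line argument to compare against. Your construction is a legitimate and self-contained realization of that intended route. Every step checks out: $K=\mathrm{\ker}(G)$ is a critical independent set (Theorem \ref{th4}(iii)) with $d(K)=d(G)$, Theorem \ref{th2} gives the matching $M_1$ saturating $N(K)$, the Hall verification is sound because $T\cup K\subseteq S$ is independent and $N(T)\cup N(K)$ is the disjoint union of $N(K)$ and $N(T)\setminus N(K)$ with $N(T)$ avoiding $S$, and the two matchings occupy disjoint vertex sets, yielding $\mu(G)\ge |N(K)|+\alpha(G)-|K|=\alpha(G)-d(G)$. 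Two small observations: the argument degenerates gracefully when $\mathrm{\ker}(G)=\emptyset$ (then $d(G)=0$ and Hall's condition alone gives a matching saturating $S$); and you do not really need the full strength of Theorem \ref{th6} --- Theorem \ref{th3} already supplies a maximum independent set containing any chosen critical independent set, so the same construction works with an arbitrary critical independent set in place of $\mathrm{\ker}(G)$. Compared with the standard proofs via Larson's independence decomposition (where one shows $G-N[I_c]$ has a perfect matching for a maximum critical independent set $I_c$), your argument is more elementary, and it has the merit of exhibiting an explicit matching of size $\alpha(G)-d(G)$.
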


Following Ore \cite{Ore55}, \cite{Ore62}, the number $\delta(X)=d\left(
X\right)  =\left\vert X\right\vert -\left\vert N\left(  X\right)  \right\vert
$ is the \textit{deficiency} of $X$, where $X\subseteq A$ or $X\subseteq B$
and $G=(A,B,E)$ is a bipartite graph. Let
\[
\delta_{0}(A)=\max\{\delta(X):X\subseteq A\},\quad\delta_{0}(B)=\max
\{\delta(Y):Y\subseteq B\}.
\]

A subset $X\subseteq A$ having $\delta(X)=\delta_{0}(A)$ is $A$%
-\textit{critical}, while $Y\subseteq B$ having $\delta(B)=\delta_{0}(B)$ is
$B$-\textit{critical}. For a bipartite graph $G=\left(  A,B,E\right)  $ let us
denote $\mathrm{\ker}_{A}(G)=\cap\left\{  S:S\text{ is }%
A\text{-\textit{critical}}\right\}  $ and \textrm{diadem}$_{A}(G)=\cup\left\{
S:S\text{ is }A\text{-\textit{critical}}\right\}  $. Similarly, $\mathrm{\ker
}_{B}(G)=\cap\left\{  S:S\text{ is }B\text{-\textit{critical}}\right\}  $ and
\textrm{diadem}$_{B}(G)=\cup\left\{  S:S\text{ is }B\text{-\textit{critical}%
}\right\}  $.

It is convenient to define $d\left(  \emptyset\right)  =\delta(\emptyset)=0$.

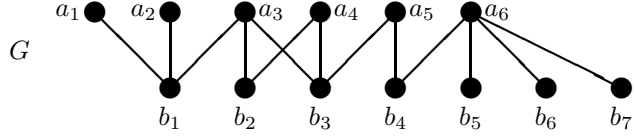
\begin{figure}[h]
\setlength{\unitlength}{1cm}\begin{picture}(5,1.9)\thicklines
\multiput(5,0.5)(1,0){7}{\circle*{0.29}}
\multiput(4,1.5)(1,0){6}{\circle*{0.29}}
\put(5,0.5){\line(-1,1){1}}
\put(5,0.5){\line(0,1){1}}
\put(5,0.5){\line(1,1){1}}
\put(6,0.5){\line(0,1){1}}
\put(6,0.5){\line(1,1){1}}
\put(7,0.5){\line(-1,1){1}}
\put(7,0.5){\line(0,1){1}}
\put(7,0.5){\line(1,1){1}}
\put(8,0.5){\line(0,1){1}}
\put(8,0.5){\line(1,1){1}}
\put(9,0.5){\line(0,1){1}}
\put(9,1.5){\line(1,-1){1}}
\put(9,1.5){\line(2,-1){2}}
\put(3.65,1.5){\makebox(0,0){$a_{1}$}}
\put(4.65,1.5){\makebox(0,0){$a_{2}$}}
\put(6.35,1.5){\makebox(0,0){$a_{3}$}}
\put(7.35,1.5){\makebox(0,0){$a_{4}$}}
\put(8.35,1.5){\makebox(0,0){$a_{5}$}}
\put(9.35,1.5){\makebox(0,0){$a_{6}$}}
\put(5,0.1){\makebox(0,0){$b_{1}$}}
\put(6,0.1){\makebox(0,0){$b_{2}$}}
\put(7,0.1){\makebox(0,0){$b_{3}$}}
\put(8,0.1){\makebox(0,0){$b_{4}$}}
\put(9,0.1){\makebox(0,0){$b_{5}$}}
\put(10,0.1){\makebox(0,0){$b_{6}$}}
\put(11,0.1){\makebox(0,0){$b_{7}$}}
\put(3,1){\makebox(0,0){$G$}}
\end{picture}\caption{$G$ is a bipartite graph without perfect matchings.}%
\label{fig233}%
\end{figure}For instance, the graph $G=(A,B,E)$ from Figure \ref{fig233} has:
$X=\left\{  a_{1},a_{2},a_{3},a_{4}\right\}  $ as an $A$-critical set,
$\mathrm{\ker}_{A}(G)=\left\{  a_{1},a_{2}\right\}  $, \textrm{diadem}%
$_{A}(G)=\left\{  a_{i}:i=1,...,5\right\}  $ and $\delta_{0}(A)=1$, while
$Y=\left\{  b_{i}:i=4,5,6,7\right\}  $ is a $B$-critical set, $\mathrm{\ker
}_{B}(G)=\left\{  b_{4},b_{5},b_{6}\right\}  $, \textrm{diadem}$_{B}%
(G)=\left\{  b_{i}:i=2,...,7\right\}  $ and $\delta_{0}(B)=2$.

As expected, there is a close relationship between critical independent sets
and $A$-critical\textit{ }or\textit{ }$B$-critical sets.

\begin{theorem}
\cite{LevMan2011b}\label{Th4} Let $G=\left(  A,B,E\right)  $ be a bipartite
graph. Then the following assertions are true:

\emph{(i) }$d(G)=\delta_{0}(A)+\delta_{0}(B)$;

\emph{(ii) }$\alpha\left(  G\right)  =\left\vert A\right\vert +\delta
_{0}(B)=\left\vert B\right\vert +\delta_{0}(A)=\mu\left(  G\right)
+\delta_{0}(A)+\delta_{0}(B)=\mu\left(  G\right)  +d\left(  G\right)  $;

\emph{(iii)} if $X$ is an $A$-\textit{critical set and }$Y$\textit{ is a }%
$B$-\textit{critical set, then }$X\cup Y$ is a critical set;

\emph{(iv)} if $Z$ is a critical independent set, then $Z\cap A$ is an
$A$-\textit{critical set }and $Z\cap B$ is \textit{a }$B$-\textit{critical
set;}

\emph{(v)} if $X$ is either an $A$-\textit{critical set or a }$B$%
-\textit{critical set, then there is a matching from }$N\left(  X\right)  $
into $X$.
\end{theorem}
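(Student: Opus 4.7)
The plan is to derive parts (i), (iii), and (iv) from a single splitting identity on $d$, reduce (ii) to this together with K\"onig-Egerv\'ary theory, and prove (v) by a Hall-type contradiction argument.

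For the splitting identity, I observe that whenever $W\subseteq V(G)$ is decomposed as $W=W_{A}\cup W_{B}$ with $W_{A}=W\cap A$ and $W_{B}=W\cap B$, the neighborhoods $N(W_{A})\subseteq B$ and $N(W_{B})\subseteq A$ lie in different parts and are therefore disjoint, so $|N(W)|=|N(W_{A})|+|N(W_{B})|$, giving the key identity $d(W)=d(W_{A})+d(W_{B})$. Applied to a set $W$ attaining $d(G)$, this yields $d(G)\leq\delta_{0}(A)+\delta_{0}(B)$; applied to $W=X\cup Y$ with $X$ an $A$-critical set and $Y$ a $B$-critical set, it yields $d(X\cup Y)=\delta_{0}(A)+\delta_{0}(B)$, which both establishes (i) and shows that $X\cup Y$ attains $d(G)$, proving (iii). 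For (iv), I apply the identity to a critical independent $Z$: since $d(Z)=d(G)=\delta_{0}(A)+\delta_{0}(B)$ by (i), the inequalities $d(Z\cap A)\leq\delta_{0}(A)$ and $d(Z\cap B)\leq\delta_{0}(B)$ must both be equalities, so $Z\cap A$ is $A$-critical and $Z\cap B$ is $B$-critical.

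For (ii), I would combine (i) with the fact that every bipartite graph is K\"onig-Egerv\'ary, so Theorem \ref{th8} gives $\alpha(G)-\mu(G)=d(G)=\delta_{0}(A)+\delta_{0}(B)$, together with the classical deficiency formula $\mu(G)=|A|-\delta_{0}(A)=|B|-\delta_{0}(B)$ for bipartite matchings. Substituting the two expressions for $\mu(G)$ into $\alpha(G)=\mu(G)+\delta_{0}(A)+\delta_{0}(B)$ then yields the whole chain $\alpha(G)=|A|+\delta_{0}(B)=|B|+\delta_{0}(A)=\mu(G)+\delta_{0}(A)+\delta_{0}(B)=\mu(G)+d(G)$.

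For (v), suppose $X$ is $A$-critical and there is no matching from $N(X)$ into $X$. Hall's marriage theorem then produces a set $B'\subseteq N(X)$ with $|N(B')\cap X|<|B'|$. Set $X'=X\setminus N(B')$; then every neighbor of a vertex of $X'$ avoids $B'$, so $N(X')\subseteq N(X)\setminus B'$ and hence $|N(X')|\leq|N(X)|-|B'|$, while $|X'|=|X|-|N(B')\cap X|>|X|-|B'|$. Subtracting gives $d(X')>d(X)=\delta_{0}(A)$, contradicting the definition of $\delta_{0}(A)$; the $B$-critical case is symmetric. I expect this last step to be the main obstacle, since (i)--(iv) are essentially bookkeeping on the bipartition, whereas (v) requires a genuine Hall-deficiency argument carried out inside a single part.
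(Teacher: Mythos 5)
Your proof is correct. Note that this paper only quotes Theorem \ref{Th4} from \cite{LevMan2011b} and gives no proof of it, so there is nothing internal to compare against; judged on its own, your argument is sound and essentially the natural one. The splitting identity $d(W)=d(W\cap A)+d(W\cap B)$, which follows from $N(W\cap A)\subseteq B$ and $N(W\cap B)\subseteq A$ being disjoint, does carry (i), (iii) and (iv) exactly as you describe; the only hidden ingredient in (iv) is that a critical independent set $Z$ satisfies $d(Z)=id(G)=d(G)$, i.e.\ Zhang's theorem that $id(G)=d(G)$, which the paper states in the introduction, so you are entitled to it. For (ii) you lean on Theorem \ref{th8} together with the K\H{o}nig--Ore defect formula $\mu(G)=|A|-\delta_{0}(A)=|B|-\delta_{0}(B)$; that formula is classical but is doing real work here (it is essentially equivalent to (i) plus (ii)), and a slightly leaner route is to use K\H{o}nig's theorem $\alpha(G)=|V(G)|-\mu(G)$ directly with the defect formula, bypassing Theorem \ref{th8}. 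Your Hall-type argument for (v) is complete and correct: the set $X'=X\setminus N(B')$ does satisfy $N(X')\subseteq N(X)\setminus B'$ and $|X'|>|X|-|B'|$, yielding $d(X')>\delta_{0}(A)$, the desired contradiction; this is the standard deficiency argument behind Theorem \ref{th2} as well.
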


The following lemma will be used further to give an alternative proof for the
assertion that $\mathrm{\ker}(G)=\mathrm{core}(G)$ holds for every bipartite
graph $G$.

\begin{lemma}
\label{lem1}If $G=\left(  A,B,E\right)  $ is a bipartite graph with a perfect
matching, say $M$, $S\in\Omega\left(  G\right)  $, $X\in$ $\mathrm{Ind}(G)$,
$X\subseteq V\left(  G\right)  -S$, and $G\left[  X\cup M\left(  X\right)
\right]  $ is connected, then
\[
X^{1}=X\cup M\left(  \left(  N\left(  X\right)  \cap S\right)  -M\left(
X\right)  \right)
\]
is an independent set, and $G\left[  X^{1}\cup M\left(  X^{1}\right)  \right]
$ is connected.
\end{lemma}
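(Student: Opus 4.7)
My plan is to play the bipartite structure of $G$ against the intrinsic bipartite structure of $G[X\cup M(X)]$. Since bipartite graphs are K\"onig-Egerv\'ary, Theorem \ref{Th5}\emph{(i)} gives $M(V(G)-S)\subseteq S$, and hence $M(X)\subseteq S$; in particular $M(X)$ is independent. Combined with the hypothesis that $X$ is independent, this forces every edge of $G[X\cup M(X)]$ to run between $X$ and $M(X)$, so $\{X,M(X)\}$ is itself a bipartition of $G[X\cup M(X)]$.

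The critical step is to combine this with the uniqueness (up to swapping sides) of the bipartition of any connected bipartite graph. Since $G[X\cup M(X)]$ is connected by hypothesis and also inherits the bipartition $\{(X\cup M(X))\cap A,\,(X\cup M(X))\cap B\}$ from $G$, the two bipartitions must coincide. Thus $X$ lies entirely in one part of $(A,B)$; assume without loss of generality that $X\subseteq A$, so that $M(X)\subseteq B$.

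Once $X\subseteq A$ is in hand, the remaining inclusions are bookkeeping: bipartiteness gives $N(X)\subseteq B$, hence $Y:=(N(X)\cap S)-M(X)\subseteq B$, and since $M$ swaps $A$ and $B$ we have $M(Y)\subseteq A$. Therefore $X^{1}=X\cup M(Y)\subseteq A$, which is independent simply because $A$ is an independent set of the bipartite graph $G$.

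For the connectedness of $G[X^{1}\cup M(X^{1})]$ I would use $M\circ M=\mathrm{id}$ to write $M(X^{1})=M(X)\cup Y$, so that
\[
X^{1}\cup M(X^{1})=\bigl(X\cup M(X)\bigr)\cup Y\cup M(Y).
\]
The piece $G[X\cup M(X)]$ is connected by hypothesis, every $y\in Y\subseteq N(X)$ is adjacent to some vertex of $X$, and $M(y)$ is attached to $y$ by the matching edge; so each newly added vertex plugs into the connected core. I expect the main obstacle to be spotting that the connectedness hypothesis pins $X$ into a single part of $(A,B)$ via uniqueness of bipartition; after that observation, both conclusions of the lemma fall out as routine consequences of bipartiteness.
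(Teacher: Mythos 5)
Your proof is correct, and it takes a genuinely different route from the paper's. The paper argues independence of $M\left(\left(N\left(X\right)\cap S\right)-M\left(X\right)\right)$, and the absence of edges between this set and $X$, by explicitly constructing odd cycles: a $C_{5}$ when two offending vertices have a common neighbor in $X$, and otherwise a longer odd cycle obtained by closing up a path of appropriate parity inside the connected subgraph $G\left[X\cup M\left(X\right)\right]$. Your argument packages all of those parity computations into one structural observation: $M\left(X\right)\subseteq S$ makes $\left\{X,M\left(X\right)\right\}$ a bipartition of the connected graph $G\left[X\cup M\left(X\right)\right]$, and since a connected bipartite graph has a unique bipartition, this must agree with the one inherited from $\left(A,B\right)$, forcing $X$ into a single side, say $X\subseteq A$. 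From there $X^{1}\subseteq A$ and independence is immediate, with no case analysis. This is cleaner and, arguably, exposes the real reason the connectivity hypothesis is needed -- it is exactly what pins $X$ to one side of the bipartition. Your treatment of the connectedness of $G\left[X^{1}\cup M\left(X^{1}\right)\right]$ (via $M\left(X^{1}\right)=M\left(X\right)\cup Y$, each $y\in Y$ hanging off $X$ by an edge of $\left(X,N\left(X\right)\right)$ and each $M\left(y\right)$ hanging off $y$ by a matching edge) is in fact more explicit than the paper's one-line appeal to ``the definitions of $N$ and $M$.'' The only points worth making explicit in a final write-up are the degenerate case $X=\emptyset$ and the fact that $X\cap M\left(X\right)=\emptyset$ (which follows from $X\subseteq V\left(G\right)-S$ and $M\left(X\right)\subseteq S$); neither causes any difficulty.
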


\begin{proof}
Let us show that the set $M\left(  \left(  N\left(  X\right)  \cap S\right)
-M\left(  X\right)  \right)  $ is independent. Suppose, to the contrary, that
there exist $v_{1},v_{2}\in M\left(  \left(  N\left(  X\right)  \cap S\right)
-M\left(  X\right)  \right)  $ such that $v_{1}v_{2}\in E\left(  G\right)  $.
Hence $M\left(  v_{1}\right)  ,M\left(  v_{2}\right)  \in\left(  N\left(
X\right)  \cap S\right)  -M\left(  X\right)  $.

If $M\left(  v_{1}\right)  $ and $M\left(  v_{2}\right)  $ have a common
neighbor $w\in X$, then $\left\{  v_{1},v_{2},M\left(  v_{2}\right)
,w,M\left(  v_{1}\right)  \right\}  $ spans $C_{5}$, which is forbidden for
bipartite graphs.

Otherwise, let $w_{1},w_{2}\in X$ be neighbors of $M\left(  v_{1}\right)  $
and $M\left(  v_{2}\right)  $, respectively. Since $G\left[  X\cup M\left(
X\right)  \right]  $ is connected, there is a path with even number of edges
connecting $w_{1}$ and $w_{2}$. Together with $\left\{  w_{1},M\left(
v_{1}\right)  ,v_{1},v_{2},M\left(  v_{2}\right)  ,w_{2}\right\}  $ this path
produces a cycle of odd length in contradiction with the hypothesis on $G$
being a bipartite graph.

To complete the proof of independence of the set
\[
X^{1}=X\cup M\left(  \left(  N\left(  X\right)  \cap S\right)  -M\left(
X\right)  \right)
\]
it is enough to demonstrate that there are no edges connecting vertices of $X$
and $M\left(  \left(  N\left(  X\right)  \cap S\right)  -M\left(  X\right)
\right)  $. \begin{figure}[h]
\setlength{\unitlength}{1.0cm} \begin{picture}(5,4)\thicklines
\put(6.7,1){\oval(10,1.5)}
\put(5,1){\oval(4,1)}
\put(5,3){\oval(4,1)}
\put(6.7,3){\oval(10,1.5)}
\put(9,1){\oval(2.5,1)}
\put(9,3){\oval(2.5,1)}
\multiput(3.5,1)(0.7,0){2}{\circle*{0.29}}
\multiput(3.5,3)(0.7,0){2}{\circle*{0.29}}
\multiput(5.5,1)(0,2){2}{\circle*{0.29}}
\multiput(6.5,1)(0,2){2}{\circle*{0.29}}
\multiput(4.5,1)(0.35,0){3}{\circle*{0.1}}
\multiput(4.5,3)(0.35,0){3}{\circle*{0.1}}
\put(5.5,1){\line(0,1){2}}
\put(5.5,1){\line(3,2){3}}
\put(6.5,1){\line(0,1){2}}
\put(6.5,1){\line(3,2){3}}
\put(6.5,1){\line(1,1){2}}
\multiput(3.5,1)(0.7,0){2}{\line(0,1){2}}
\put(1,1){\makebox(0,0){$V-S$}}
\put(0.2,2.2){\makebox(0,0){$G$}}
\put(2.5,1){\makebox(0,0){$X$}}
\put(2.45,3){\makebox(0,0){$M(X)$}}
\put(1.2,3){\makebox(0,0){$S$}}
\multiput(8.5,1)(1,0){2}{\circle*{0.29}}
\multiput(8.5,3)(1,0){2}{\circle*{0.29}}
\put(8.5,1){\line(0,1){2}}
\put(9.5,1){\line(0,1){2}}
\put(10.85,1){\makebox(0,0){$M(Y)$}}
\put(10.75,3){\makebox(0,0){$Y$}}
\end{picture}\caption{$S\in\Omega(G)$, $Y=\left(  N\left(  X\right)  \cap
S\right)  -M\left(  X\right)  ${ and }$X^{1}=X\cup M\left(  Y\right)  $.}%
\end{figure}

Assume, to the contrary, that there is $vw\in E$, such that $v\in M\left(
\left(  N\left(  X\right)  \cap S\right)  -M\left(  X\right)  \right)  $ and
$w\in X$. Since $M\left(  v\right)  \in\left(  N\left(  X\right)  \cap
S\right)  -M\left(  X\right)  $ and $G\left[  X\cup M\left(  X\right)
\right]  $ is connected, it follows that there exists a path with an odd
number of edges connecting $M\left(  v\right)  $ to $w$. This path together
with the edges $vw$ and $vM\left(  v\right)  $ produces cycle of odd length,
in contradiction with the bipartiteness of $G$.

Finally, since $G\left[  X\cup M\left(  X\right)  \right]  $ is connected,
$G\left[  X^{1}\cup M\left(  X^{1}\right)  \right]  $ is connected as well, by
definitions of set functions $N$ and $M$.
\end{proof}

Theorem \ref{th6} claims that $\mathrm{\ker}(G)\subseteq\mathrm{core}(G)$ for
every graph.

\begin{theorem}
\cite{LevMan2011b}\label{th10} If $G$ is a bipartite graph, then
$\mathrm{\ker}(G)=\mathrm{core}(G)$.
\end{theorem}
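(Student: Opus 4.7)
The plan is to rely on Theorem \ref{th6} for the easy inclusion $\mathrm{\ker}(G)\subseteq\mathrm{core}(G)$ and to establish the reverse inclusion by reducing to a bipartite subgraph that admits a perfect matching. Fix $v\in\mathrm{core}(G)$ and an arbitrary critical independent set $C$; the aim is to show $v\in C$. By Theorem \ref{th3}, $C$ extends to some $S\in\Omega(G)$, and $v\in S$ because $v\in\mathrm{core}(G)$. Since $S$ is independent with $C\subseteq S$, the vertex $v$ cannot lie in $N(C)$, so either $v\in C$ (and we are done) or $v$ belongs to the induced bipartite subgraph $H=G\left[V(G)-C-N(C)\right]$.

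The central step is to prove that $H$ has a perfect matching. For every $J\in\mathrm{Ind}(H)$, the set $J\cup C$ is independent in $G$ (as $J$ has no neighbor in $C$), and a direct count gives $|N_{G}(J\cup C)|=|N(C)|+|N_{H}(J)|$, hence
\[
d_{G}(J\cup C)=d(G)+d_{H}(J).
\]
Since $d_{G}(J\cup C)\leq d(G)$, this forces $d_{H}(J)\leq 0$ for every such $J$, so $d(H)=0$. Combined with $H$ being bipartite, hence K\"{o}nig-Egerv\'{a}ry, Theorem \ref{th8} yields $\alpha(H)=\mu(H)=|V(H)|/2$, so $H$ has a perfect matching $M_{H}$.

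A short counting argument, comparing $|S|=|C|+|S\cap V(H)|$ with $|T\cup C|$ for arbitrary $T\in\Omega(H)$, yields $\alpha(G)=|C|+\alpha(H)$, hence $T\cup C\in\Omega(G)$ for every $T\in\Omega(H)$. Since $v\in\mathrm{core}(G)$ and we are assuming $v\notin C$, this forces $v\in T$ for every $T\in\Omega(H)$, i.e., $v\in\mathrm{core}(H)$. To contradict this I would apply Lemma \ref{lem1} iteratively inside $H$: fix $T\in\Omega(H)$, start with $X^{0}=\{M_{H}(v)\}\subseteq V(H)-T$ (independent, with $G[X^{0}\cup M_{H}(X^{0})]$ a single edge), and set $X^{i+1}=X^{i}\cup M_{H}\left((N(X^{i})\cap T)-M_{H}(X^{i})\right)$. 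Lemma \ref{lem1} keeps each $X^{i}$ independent with $G[X^{i}\cup M_{H}(X^{i})]$ connected, and by finiteness the chain stabilizes at some $X^{*}$ satisfying $N(X^{*})\cap T\subseteq M_{H}(X^{*})$; then $(T-M_{H}(X^{*}))\cup X^{*}$ is a maximum independent set of $H$ missing $v$, the required contradiction. The main obstacle is the identity $d_{G}(J\cup C)=d(G)+d_{H}(J)$, which packages the perfect-matching structure of $H$ and depends crucially on $J$ having no neighbor in $C$.
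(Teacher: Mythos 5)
Your proposal is correct, but it takes a genuinely different route from the paper's proof. The paper reduces the claim, via the characterization in Theorem \ref{th9}(ii), to showing that no nonempty $Z\subseteq N(\mathrm{core}(G))$ satisfies $\left\vert N(Z)\cap\mathrm{core}(G)\right\vert =\left\vert Z\right\vert$; it then takes a minimal such $Z_{0}$, spends three claims proving that $Z_{0}$ is independent and that $G\left[ Z_{0}\cup M(Z_{0})\right]$ is connected with a perfect matching, and only then iterates Lemma \ref{lem1} to build a maximum independent set avoiding $M(Z_{0})\subseteq\mathrm{core}(G)$. You bypass Theorem \ref{th9} entirely: fixing $v\in\mathrm{core}(G)$ and an arbitrary critical independent set $C$, you show that $H=G-N[C]$ satisfies $d(H)=0$ via the identity $d_{G}(J\cup C)=d(G)+d_{H}(J)$ (which is sound: $N_{G}(J)\cap C=\emptyset$ and $N_{G}(J)-N(C)=N_{H}(J)$ because $J$ avoids $N(C)$), hence by Theorem \ref{th8} that $H$ has a perfect matching; the counting step $\alpha(G)=\left\vert C\right\vert +\alpha(H)$ then reduces everything to the statement that a bipartite graph with a perfect matching has empty core. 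Your iteration of Lemma \ref{lem1} inside $H$ does prove that last statement (the invariants $X^{i}\subseteq V(H)-T$ and $M_{H}(X^{i})\subseteq T$ hold inductively, so the stabilized set yields a maximum independent set of $H$ missing $v$), though it is more machinery than needed: once $H$ is bipartite with a perfect matching, its two parts are disjoint maximum independent sets of $H$, so $\mathrm{core}(H)=\emptyset$ in one line and Lemma \ref{lem1} can be dropped altogether. What your route buys is essentially Larson's critical independence decomposition specialized to bipartite graphs, making the role of the perfect matching on $G-N[C]$ explicit; what the paper's route buys is a direct illustration of the tight-set characterization of $\ker(G)$ from Theorem \ref{th9}.
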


\begin{proof}
[Alternative Proof]The assertions are clearly true, whenever $\mathrm{core}%
(G)=\emptyset$, i.e., for $G$ having a perfect matching. Assume that
$\mathrm{core}(G)\neq\emptyset$.

Let $S\in\Omega\left(  G\right)  $ and $M$ be a maximum matching. By Theorem
\ref{Th5}\emph{(i)}, $M$ matches $V\left(  G\right)  -S$ into $S$, and
$N(\mathrm{core}(G))$ into $\mathrm{core}(G)$.

According to Theorem \ref{th9}\emph{(ii)}, it is sufficient to show that there
is no set $Z\subseteq N\left(  \mathrm{core}(G)\right)  $, $Z\neq\emptyset$,
such that $\left\vert N\left(  Z\right)  \cap\mathrm{core}(G)\right\vert
=\left\vert Z\right\vert $.

Suppose, to the contrary, that there exists a non-empty set $Z\subseteq$
$N\left(  \mathrm{core}(G)\right)  $ such that $\left\vert N\left(  Z\right)
\cap\mathrm{core}(G)\right\vert =\left\vert Z\right\vert $. Let $Z_{0}$ be a
minimal non-empty subset of $N\left(  \mathrm{core}(G)\right)  $ enjoying this equality.

Clearly, $H=G\left[  Z_{0}\cup M\left(  Z_{0}\right)  \right]  $ is bipartite,
because it is a subgraph of a bipartite graph. Moreover, the restriction of
$M$ on $H$ is a perfect matching.

\textit{Claim 1.} $Z_{0}$ is independent.

Since $H$ is a bipartite graph with a perfect matching it has two maximum
independent sets at least. Hence there exists $W\in\Omega\left(  H\right)  $
different from $M\left(  Z_{0}\right)  $. Thus $W\cap Z_{0}\neq\emptyset$.
Therefore, $N\left(  W\cap Z_{0}\right)  \cap\mathrm{core}(G)=M\left(  W\cap
Z_{0}\right)  $. Consequently,
\[
\left\vert N\left(  W\cap Z_{0}\right)  \cap\mathrm{core}(G)\right\vert
=\left\vert M\left(  W\cap Z_{0}\right)  \right\vert =\left\vert W\cap
Z_{0}\right\vert .
\]
Finally, $W\cap Z_{0}=Z_{0}$, because $Z_{0}$ has been chosen as a minimal
subset of $N\left(  \mathrm{core}(G)\right)  $ such that $\left\vert N\left(
Z_{0}\right)  \cap\mathrm{core}(G)\right\vert =\left\vert Z_{0}\right\vert $.
Since $\left\vert Z_{0}\right\vert =\alpha\left(  H\right)  =\left\vert
W\right\vert $ we conclude with $W=Z_{0}$, which means, in particular, that
$Z_{0}$ is independent.

\textit{Claim 2.} $H$ is a connected graph.

Otherwise, for any connected component of $H$, say $\tilde{H}$, the set
$V\left(  \tilde{H}\right)  \cap Z_{0}$ contradicts the minimality property of
$Z_{0}$.

\textit{Claim 3.} $Z_{0}\cup$ $\left(  \mathrm{core}(G)-M\left(  Z_{0}\right)
\right)  $ is independent.

By Claim 1 $Z_{0}$ is independent. The equality $\left\vert N\left(
Z_{0}\right)  \cap\mathrm{core}(G)\right\vert =\left\vert Z_{0}\right\vert $
implies $N\left(  Z_{0}\right)  \cap\mathrm{core}(G)=M\left(  Z_{0}\right)  $,
which means that there are no edges connecting $Z_{0}$ and $\mathrm{core}%
(G)-M\left(  Z_{0}\right)  $. Consequently, $Z_{0}\cup$ $\left(
\mathrm{core}(G)-M\left(  Z_{0}\right)  \right)  $ is independent.

\textit{Claim 4.} $Z_{0}\cup$ $\left(  \mathrm{core}(G)-M\left(  Z_{0}\right)
\right)  $ is included in a maximum independent set.

Let $Z_{i}=M\left(  \left(  N\left(  Z_{i-1}\right)  \cap S\right)  -M\left(
Z_{i-1}\right)  \right)  ,1\leq i<\infty$. By Lemma \ref{lem1} all the sets
$Z^{i}=\bigcup\limits_{0\leq j\leq i}Z_{j},1\leq i<\infty$ are independent.
Define%
\[
Z^{\infty}=\bigcup\limits_{0\leq i\leq\infty}Z_{i},
\]
which is, actually, the largest set in the sequence $\left\{  Z^{i},1\leq
i<\infty\right\}  $.\begin{figure}[h]
\setlength{\unitlength}{1.0cm} \begin{picture}(5,4)\thicklines
\put(7.5,1){\oval(11,1.5)}
\put(6.7,3){\oval(12.5,1.7)}
\put(3.5,3){\oval(4.35,1.2)}
\put(2.5,3){\oval(1.5,0.8)}
\put(4.5,1){\oval(2,0.8)}
\put(4.5,3){\oval(2,0.8)}
\put(7.5,1){\oval(2,0.8)}
\put(7.5,3){\oval(2,0.8)}
\put(10.5,1){\oval(2,0.8)}
\put(10.5,3){\oval(2,0.8)}
\multiput(4,1)(1,0){2}{\circle*{0.29}}
\multiput(4,3)(1,0){2}{\circle*{0.29}}
\multiput(4,1)(1,0){2}{\line(0,1){2}}
\put(4,1){\line(3,2){3}}
\put(5,1){\line(3,2){3}}
\put(5,1){\line(1,1){2}}
\multiput(7,1)(1,0){2}{\circle*{0.29}}
\multiput(7,3)(1,0){2}{\circle*{0.29}}
\put(7,1){\line(0,1){2}}
\put(8,1){\line(0,1){2}}
\put(7,1){\line(3,2){3}}
\put(8,1){\line(3,2){3}}
\put(8,1){\line(1,1){2}}
\multiput(10,1)(1,0){2}{\circle*{0.29}}
\multiput(10,3)(1,0){2}{\circle*{0.29}}
\put(10,1){\line(0,1){2}}
\put(11,1){\line(0,1){2}}
\multiput(11.9,1)(0.35,0){3}{\circle*{0.1}}
\multiput(11.9,3)(0.35,0){3}{\circle*{0.1}}
\put(0.1,3){\makebox(0,0){$S$}}
\put(1,1){\makebox(0,0){$V-S$}}
\put(0.2,2){\makebox(0,0){$G$}}
\put(0.9,3){\makebox(0,0){$core$}}
\put(3,1){\makebox(0,0){$Z_{0}$}}
\put(4.5,3.1){\makebox(0,0){$Y_{0}$}}
\put(2.5,3.1){\makebox(0,0){$Q$}}
\put(6,1){\makebox(0,0){$Z_{1}$}}
\put(7.5,3.1){\makebox(0,0){$Y_{1}$}}
\put(9,1){\makebox(0,0){$Z_{2}$}}
\put(10.5,3.1){\makebox(0,0){$Y_{2}$}}
\end{picture}\caption{$S\in\Omega(G)$, $Q=\mathrm{core}\left(  G\right)
-M\left(  Z_{0}\right)  $, $Y_{0}=$ $M\left(  Z_{0}\right)  $, $Y_{1}=\left(
N\left(  Z_{0}\right)  -M\left(  Z_{0}\right)  \right)  \cap S$, $Y_{2}=...$,
and $Z_{i}=M\left(  Y_{i}\right)  ,i=1,2,...$ {.}}%
\end{figure}

The inclusion
\[
Z_{0}\cup\left(  \mathrm{core}(G)-M\left(  Z_{0}\right)  \right)
\subseteq\left(  S-M\left(  Z^{\infty}\right)  \right)  \cup Z^{\infty}%
\]
is justified by the definition of $Z^{\infty}$.

Since $\left\vert M\left(  Z^{\infty}\right)  \right\vert =\left\vert
Z^{\infty}\right\vert $ we obtain $\left\vert \left(  S-M\left(  Z^{\infty
}\right)  \right)  \cup Z^{\infty}\right\vert =\left\vert S\right\vert $.
According to the definition of $Z^{\infty}$ the set
\[
\left(  N\left(  Z^{\infty}\right)  \cap S\right)  -M\left(  Z^{\infty
}\right)
\]
is empty. In other words, the set $\left(  S-M\left(  Z^{\infty}\right)
\right)  \cup Z^{\infty}$ is independent. Therefore, we arrive at
\[
\left(  S-M\left(  Z^{\infty}\right)  \right)  \cup Z^{\infty}\in\Omega\left(
G\right)  .
\]

Consequently, $\left(  S-M\left(  Z^{\infty}\right)  \right)  \cup Z^{\infty}$
is a desired enlargement of $Z_{0}\cup$ $\left(  \mathrm{core}(G)-M\left(
Z_{0}\right)  \right)  $.

\textit{Claim 5.} $\mathrm{core}(G)\cap\left(  \left(  S-M\left(  Z^{\infty
}\right)  \right)  \cup Z^{\infty}\right)  =\mathrm{core}(G)-M\left(
Z_{0}\right)  $.

The only part of $\left(  S-M\left(  Z^{\infty}\right)  \right)  \cup
Z^{\infty}$ that interacts with $\mathrm{core}(G)$ is the subset
\[
Z_{0}\cup\left(  \mathrm{core}(G)-M\left(  Z_{0}\right)  \right)  .
\]
Hence we obtain
\begin{gather*}
\mathrm{core}(G)\cap\left(  \left(  S-M\left(  Z^{\infty}\right)  \right)
\cup Z^{\infty}\right)  =\\
=\mathrm{core}(G)\cap\left(  Z_{0}\cup\left(  \mathrm{core}(G)-M\left(
Z_{0}\right)  \right)  \right)  =\mathrm{core}(G)-M\left(  Z_{0}\right)  .
\end{gather*}

Since $Z_{0}$ is non-empty, by Claim 5 we arrive at the following
contradiction
\[
\mathrm{core}(G)\nsubseteq\left(  S-M\left(  Z^{\infty}\right)  \right)  \cup
Z^{\infty}\in\Omega\left(  G\right)  .
\]

Finally, we conclude with the fact there is no set $Z\subseteq$ $N\left(
\mathrm{core}(G)\right)  ,Z\neq\emptyset$ such that $\left\vert N\left(
Z\right)  \cap\mathrm{core}(G)\right\vert =\left\vert Z\right\vert $, which,
by Theorem \ref{th9}, means that $\mathrm{core}(G)$ and $\mathrm{\ker}(G)$ coincide.
\end{proof}

Notice that there are non-bipartite graphs enjoying the equality
$\mathrm{\ker}(G)=\mathrm{core}(G)$; e.g., the graphs from Figure \ref{fig14},
where only $G_{1}$ is a K\"{o}nig-Egerv\'{a}ry graph. \begin{figure}[h]
\setlength{\unitlength}{1cm}\begin{picture}(5,1.2)\thicklines
\multiput(2,0)(1,0){4}{\circle*{0.29}}
\multiput(3,1)(1,0){3}{\circle*{0.29}}
\put(2,0){\line(1,0){3}}
\put(3,0){\line(0,1){1}}
\put(5,0){\line(0,1){1}}
\put(4,1){\line(1,0){1}}
\put(3,0){\line(1,1){1}}
\put(1.7,0){\makebox(0,0){$x$}}
\put(2.7,1){\makebox(0,0){$y$}}
\put(1,0.5){\makebox(0,0){$G_{1}$}}
\multiput(8,0)(1,0){5}{\circle*{0.29}}
\multiput(9,1)(1,0){3}{\circle*{0.29}}
\put(8,0){\line(1,0){4}}
\put(9,0){\line(0,1){1}}
\put(10,0){\line(0,1){1}}
\put(10,1){\line(1,0){1}}
\put(11,1){\line(1,-1){1}}
\put(7.7,0){\makebox(0,0){$a$}}
\put(8.7,1){\makebox(0,0){$b$}}
\put(7,0.5){\makebox(0,0){$G_{2}$}}
\end{picture}\caption{$\mathrm{core}(G_{1})=\ker\left(  G_{1}\right)
=\{x,y\}$ and $\mathrm{core}(G_{2})=\ker\left(  G_{2}\right)  =\{a,b\}$.}%
\label{fig14}%
\end{figure}

There is a non-bipartite K\"{o}nig-Egerv\'{a}ry graph $G$, such that
$\mathrm{\ker}(G)\neq\mathrm{core}(G)$. For instance, the graph $G_{1}$ from
Figure \ref{fig222} has $\mathrm{\ker}(G_{1})=\left\{  x,y\right\}  $, while
$\mathrm{core}(G_{1})=\left\{  x,y,u,v\right\}  $. The graph $G_{2}$ from
Figure \ref{fig222} has $\mathrm{\ker}(G_{2})=\emptyset$, while $\mathrm{core}%
(G_{2})=\left\{  w\right\}  $.

\begin{figure}[h]
\setlength{\unitlength}{1cm}\begin{picture}(5,1.3)\thicklines
\multiput(4,0)(1,0){4}{\circle*{0.29}}
\multiput(3,1)(1,0){5}{\circle*{0.29}}
\put(4,0){\line(1,0){3}}
\put(4,0){\line(0,1){1}}
\put(3,1){\line(1,-1){1}}
\put(5,0){\line(0,1){1}}
\put(5,0){\line(1,1){1}}
\put(5,1){\line(1,-1){1}}
\put(6,0){\line(0,1){1}}
\put(7,0){\line(0,1){1}}
\put(2.7,1){\makebox(0,0){$x$}}
\put(3.7,1){\makebox(0,0){$y$}}
\put(4.7,1){\makebox(0,0){$u$}}
\put(6.3,1){\makebox(0,0){$v$}}
\put(2,0.5){\makebox(0,0){$G_{1}$}}
\multiput(10,0)(1,0){2}{\circle*{0.29}}
\multiput(12,0)(0,1){2}{\circle*{0.29}}
\put(10,0){\line(1,0){2}}
\put(11,0){\line(1,1){1}}
\put(12,0){\line(0,1){1}}
\put(10,0.3){\makebox(0,0){$w$}}
\put(9,0.5){\makebox(0,0){$G_{2}$}}
\end{picture}\caption{Both $G_{1}$ and $G_{2}$\ are K\"{o}nig-Egerv\'{a}ry
graphs. Only $G_{2}$\ has a perfect matching.}%
\label{fig222}%
\end{figure}
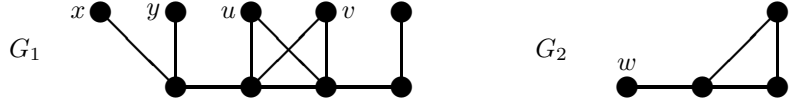

\section{$\mathrm{\ker}\left(  G\right)  $ and $\mathrm{diadem}(G)$ in
K\"{o}nig-Egerv\'{a}ry graphs}

There is a non-K\"{o}nig-Egerv\'{a}ry graph $G$ with $V\left(  G\right)
=N\left(  \mathrm{core}(G)\right)  \cup\mathrm{corona}(G)$; e.g., the graph
$G$ from Figure \ref{fig1777}.

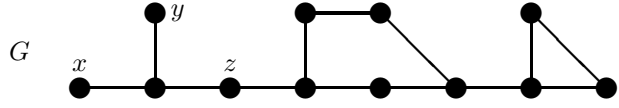
\begin{figure}[h]
\setlength{\unitlength}{1cm}\begin{picture}(5,1.2)\thicklines
\multiput(4,0)(1,0){8}{\circle*{0.29}}
\multiput(5,1)(2,0){2}{\circle*{0.29}}
\multiput(8,1)(2,0){2}{\circle*{0.29}}
\put(4,0){\line(1,0){7}}
\put(5,0){\line(0,1){1}}
\put(7,0){\line(0,1){1}}
\put(7,1){\line(1,0){1}}
\put(8,1){\line(1,-1){1}}
\put(10,0){\line(0,1){1}}
\put(10,1){\line(1,-1){1}}
\put(4,0.3){\makebox(0,0){$x$}}
\put(5.3,1){\makebox(0,0){$y$}}
\put(6,0.3){\makebox(0,0){$z$}}
\put(3.2,0.5){\makebox(0,0){$G$}}
\end{picture}
\caption{$G$ is not a K\"{o}nig-Egerv\'{a}ry graph, and $\mathrm{core}%
(G)=\left\{  x,y,z\right\}  $. }%
\label{fig1777}%
\end{figure}

\begin{theorem}
\label{th11}If $G$ is a K\"{o}nig-Egerv\'{a}ry graph, then

\emph{(i)}$\ \left\vert \mathrm{corona}(G)\right\vert +\left\vert
\mathrm{core}(G)\right\vert =2\alpha\left(  G\right)  $;

\emph{(ii) }$\mathrm{diadem}(G)=\mathrm{corona}(G)$, while $\mathrm{diadem}%
(G)\subseteq\mathrm{corona}(G)$ is true for every graph;

\emph{(iii) }$\left\vert \ker\left(  G\right)  \right\vert +\left\vert
\mathrm{diadem}\left(  G\right)  \right\vert \leq2\alpha\left(  G\right)  $.
\end{theorem}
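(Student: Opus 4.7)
The plan is to treat the three assertions in the order stated, because each reduces to applying one of the already-proved theorems, and parts (ii) and (iii) will use part (i).

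For \emph{(i)}, I will combine Theorem \ref{Th5}\emph{(ii)} with Theorem \ref{th8}. For a König-Egerváry graph, Theorem \ref{Th5}\emph{(ii)} gives $N(\mathrm{core}(G)) = V(G) - \mathrm{corona}(G)$, so $|\mathrm{corona}(G)| = |V(G)| - |N(\mathrm{core}(G))|$. Theorem \ref{th8} supplies $|\mathrm{core}(G)| - |N(\mathrm{core}(G))| = \alpha(G) - \mu(G)$, i.e.\ $|N(\mathrm{core}(G))| = |\mathrm{core}(G)| - \alpha(G) + \mu(G)$. Substituting, and using the König-Egerváry identity $|V(G)| = \alpha(G) + \mu(G)$, a direct calculation gives $|\mathrm{corona}(G)| = 2\alpha(G) - |\mathrm{core}(G)|$, which is \emph{(i)}.

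For \emph{(ii)}, the general inclusion $\mathrm{diadem}(G) \subseteq \mathrm{corona}(G)$ is immediate from Theorem \ref{th3}: every critical independent set extends to some $S \in \Omega(G)$, hence sits inside $\mathrm{corona}(G)$, so the union of all critical independent sets is also contained in $\mathrm{corona}(G)$. For the reverse inclusion in the König-Egerváry case, I invoke Theorem \ref{th5}, which asserts that every maximum independent set of a König-Egerváry graph is critical. Therefore $\mathrm{corona}(G) = \bigcup\{S : S \in \Omega(G)\}$ is itself a union of critical independent sets, hence contained in $\mathrm{diadem}(G)$, giving equality.

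For \emph{(iii)}, I simply chain the previous parts with Theorem \ref{th6}. The inclusion $\ker(G) \subseteq \mathrm{core}(G)$ from Theorem \ref{th6} yields $|\ker(G)| \leq |\mathrm{core}(G)|$, while \emph{(ii)} provides $|\mathrm{diadem}(G)| = |\mathrm{corona}(G)|$. Adding these and applying \emph{(i)} gives
\[
|\ker(G)| + |\mathrm{diadem}(G)| \leq |\mathrm{core}(G)| + |\mathrm{corona}(G)| = 2\alpha(G).
\]
No step looks like a genuine obstacle: the only thing worth double-checking is that Theorem \ref{Th5}\emph{(ii)} really delivers the set equality $N(\mathrm{core}(G)) = V(G) - \mathrm{corona}(G)$ (not merely an inclusion), so that the cardinality computation in \emph{(i)} is exact; this is explicitly stated in the excerpt, so the argument goes through cleanly.
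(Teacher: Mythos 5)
Your proposal is correct and follows essentially the same route as the paper: part (i) via Theorem \ref{Th5}\emph{(ii)} and Theorem \ref{th8} together with $|V(G)|=\alpha(G)+\mu(G)$, part (ii) via Theorems \ref{th3} and \ref{th5}, and part (iii) by combining (i), (ii) with Theorem \ref{th6}. No gaps.
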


\begin{proof}
\emph{(i) }Using Theorems \ref{Th5}\emph{(ii) }and \ref{th8}, we infer that
\begin{gather*}
\left\vert \mathrm{corona}(G)\right\vert +\left\vert \mathrm{core}%
(G)\right\vert =\left\vert \mathrm{corona}(G)\right\vert +\left\vert N\left(
\mathrm{core}(G)\right)  \right\vert +\left\vert \mathrm{core}(G)\right\vert
-\left\vert N\left(  \mathrm{core}(G)\right)  \right\vert =\\
=\left\vert V\left(  G\right)  \right\vert +d\left(  G\right)  =\alpha\left(
G\right)  +\mu\left(  G\right)  +d\left(  G\right)  =2\alpha\left(  G\right)
.
\end{gather*}
as claimed.

\emph{(ii)} Every $S\in\Omega\left(  G\right)  $ is a critical set, by Theorem
\ref{th5}. Hence we deduce that $\mathrm{corona}(G)\subseteq\mathrm{diadem}%
(G)$. On the other hand, for every graph each critical independent set is
included in a maximum independent set, according to Theorem \ref{th3}. Thus,
we infer that $\mathrm{diadem}(G)\subseteq\mathrm{corona}(G)$. Consequently,
the equality $\mathrm{diadem}(G)=\mathrm{corona}(G)$ holds.

\emph{(iii)} It follows by combining parts \emph{(i),(ii)} and Theorem
\ref{th6}.
\end{proof}

Notice that the graph from Figure \ref{fig1777} has $\left\vert
\mathrm{corona}(G)\right\vert +\left\vert \mathrm{core}(G)\right\vert
=13>12=2\alpha\left(  G\right)  $.

For a K\"{o}nig-Egerv\'{a}ry graph with $\left\vert \ker\left(  G\right)
\right\vert +\left\vert \text{\textrm{diadem}}\left(  G\right)  \right\vert
<2\alpha\left(  G\right)  $ see Figure \ref{fig222}. Figure \ref{fig1777}
shows that it is possible for a graph to have $\mathrm{diadem}%
(G)\varsubsetneqq\mathrm{corona}(G)$ and $\mathrm{\ker}(G)\varsubsetneqq
\mathrm{core}(G)$.

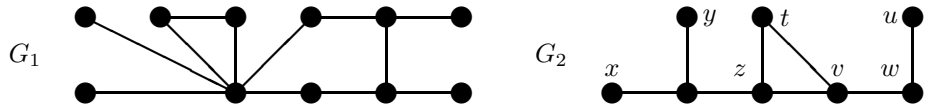
\begin{figure}[h]
\setlength{\unitlength}{1cm}\begin{picture}(5,1.2)\thicklines
\multiput(2,1)(1,0){6}{\circle*{0.29}}
\multiput(4,0)(1,0){4}{\circle*{0.29}}
\put(2,0){\circle*{0.29}}
\put(2,0){\line(1,0){5}}
\put(2,1){\line(2,-1){2}}
\put(3,1){\line(1,-1){1}}
\put(3,1){\line(1,0){1}}
\put(4,0){\line(0,1){1}}
\put(4,0){\line(1,1){1}}
\put(5,1){\line(1,0){2}}
\put(6,0){\line(0,1){1}}
\put(1.2,0.5){\makebox(0,0){$G_{1}$}}
\multiput(9,0)(1,0){5}{\circle*{0.29}}
\multiput(10,1)(1,0){2}{\circle*{0.29}}
\put(13,1){\circle*{0.29}}
\put(9,0){\line(1,0){4}}
\put(10,0){\line(0,1){1}}
\put(11,0){\line(0,1){1}}
\put(11,1){\line(1,-1){1}}
\put(13,0){\line(0,1){1}}
\put(9,0.3){\makebox(0,0){$x$}}
\put(10.3,1){\makebox(0,0){$y$}}
\put(11.3,1){\makebox(0,0){$t$}}
\put(12.7,1){\makebox(0,0){$u$}}
\put(10.7,0.3){\makebox(0,0){$z$}}
\put(12,0.3){\makebox(0,0){$v$}}
\put(12.7,0.3){\makebox(0,0){$w$}}
\put(8.2,0.5){\makebox(0,0){$G_{2}$}}
\end{picture}\caption{$G_{1}$ is a non-bipartite K\"{o}nig-Egerv\'{a}ry graph,
such that $\mathrm{\ker}(G_{1})=\mathrm{core}(G_{1})$ and \textrm{diadem}%
$\left(  G_{1}\right)  =\mathrm{corona}(G_{1})$; $G_{2}$ is a
non-K\"{o}nig-Egerv\'{a}ry graph, such that $\mathrm{\ker}(G_{2}%
)=\mathrm{core}(G_{2})=\{x,y\}$; \textrm{diadem}$\left(  G_{2}\right)
\cup\{z,t,v,w\}=\mathrm{corona}(G_{2})$.}%
\label{fig17888}%
\end{figure}The combination of $\mathrm{diadem}(G)\varsubsetneqq
\mathrm{corona}(G)$ and $\mathrm{\ker}(G)=\mathrm{core}(G)$ is realized in
Figure \ref{fig17888}.

\begin{proposition}
\label{Cor1} Let $G=\left(  A,B,E\right)  $ be a bipartite graph.

\emph{(i)} \cite{Ore62} If $X=\ker_{A}\left(  G\right)  $ \textit{and }%
$Y$\textit{ is a }$B$-\textit{critical set, then }$X\cap N\left(  Y\right)
=N\left(  X\right)  \cap Y=\emptyset$;

\emph{(ii)} \cite{Ore55} $\ker_{A}\left(  G\right)  \cap N\left(  \ker
_{B}\left(  G\right)  \right)  =N\left(  \ker_{A}\left(  G\right)  \right)
\cap\ker_{B}\left(  G\right)  =\emptyset$.
\end{proposition}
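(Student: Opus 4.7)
The plan is to derive (i) from the previously-established identity $\ker(G) = \mathrm{core}(G)$ for bipartite graphs (Theorem \ref{th10}), and then to obtain (ii) as the special case of (i) with $Y = \ker_B(G)$, after verifying that $\ker_B(G)$ is itself $B$-critical.

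For (i), the first step is the inclusion $\ker_A(G) \subseteq \ker(G) = \mathrm{core}(G)$. For any critical independent set $Z$ of $G$, Theorem \ref{Th4}(iv) guarantees that $Z \cap A$ is $A$-critical, so $\ker_A(G) \subseteq Z \cap A \subseteq Z$; intersecting over all critical independent $Z$ and then invoking Theorem \ref{th10} yields the claim. The second step exhibits, for any $B$-critical set $Y$, a maximum independent set disjoint from $N(Y)$. Put $W = (A \setminus N(Y)) \cup Y$. Since $N(Y) \subseteq A$, one has $|W| = |A| - |N(Y)| + |Y| = |A| + \delta_0(B) = \alpha(G)$ by Theorem \ref{Th4}(ii). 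The bipartite partition, the independence of $Y$, and the very definition of $A \setminus N(Y)$ together force $W$ to be independent, so $W \in \Omega(G)$. Hence $\mathrm{core}(G) \subseteq W$, and $W \cap N(Y) = \emptyset$ (the only potentially non-empty contribution $Y \cap N(Y)$ vanishes because $Y$ is independent), giving $\mathrm{core}(G) \cap N(Y) = \emptyset$. Combining the two steps yields $\ker_A(G) \cap N(Y) = \emptyset$; equivalently $N(\ker_A(G)) \cap Y = \emptyset$, since both simply encode the absence of an edge between $\ker_A(G)$ and $Y$.

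For (ii), it suffices to show $\ker_B(G)$ is $B$-critical and then apply (i). The supermodularity of $d$ (Theorem \ref{th4}(i)) restricts to subsets of $B$: for $B$-critical $Y_1, Y_2$, the inequality $d(Y_1 \cap Y_2) + d(Y_1 \cup Y_2) \geq 2\delta_0(B)$, combined with the bound $d(Y) \leq \delta_0(B)$ valid for every $Y \subseteq B$, forces both $Y_1 \cap Y_2$ and $Y_1 \cup Y_2$ to be $B$-critical. Iterating this closure under intersection shows $\ker_B(G)$ is $B$-critical, and (ii) is then (i) applied with $Y = \ker_B(G)$.

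The delicate point I anticipate is the first step of (i): bridging the definitions of $A$-critical sets and critical independent sets of $G$ via Theorem \ref{Th4}(iv), so that the nontrivial equality $\ker(G) = \mathrm{core}(G)$ from Theorem \ref{th10} can be cashed in. Everything else is essentially cardinality arithmetic inside the bipartite partition.
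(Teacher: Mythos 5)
Your proof is correct. Note first that the paper itself offers no proof of Proposition \ref{Cor1}; it is quoted from Ore's work, so there is nothing internal to compare against line by line. Your derivation is sound: the inclusion $\ker_A(G)\subseteq Z\cap A\subseteq Z$ for every critical independent set $Z$ (via Theorem \ref{Th4}(iv)) correctly gives $\ker_A(G)\subseteq\ker(G)=\mathrm{core}(G)$; the set $W=(A\setminus N(Y))\cup Y$ is indeed independent of cardinality $|A|+\delta_0(B)=\alpha(G)$, hence a maximum independent set avoiding $N(Y)$, which forces $\mathrm{core}(G)\cap N(Y)=\emptyset$; and the symmetry $X\cap N(Y)=\emptyset\iff N(X)\cap Y=\emptyset$ is just the statement that no edge joins $X$ to $Y$. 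Your closure argument for (ii) -- supermodularity of $d$ together with the bound $d(Y)\leq\delta_0(B)$ for $Y\subseteq B$ forces intersections of $B$-critical sets to be $B$-critical, so $\ker_B(G)$ is itself $B$-critical -- is also right, and there is no circularity in invoking Theorem \ref{th10}, since its proof in the paper rests only on Lemma \ref{lem1} and Theorems \ref{Th5} and \ref{th9}. The one observation worth making is that you are using a rather heavy hammer: Theorem \ref{th10} is one of the paper's main results, whereas Ore's original proposition is an elementary fact of bipartite deficiency theory that predates all of this machinery; a direct counting argument on $X\setminus N(Y)$ and $Y\setminus N(X)$ would prove (i) without any reference to $\mathrm{core}$ or $\ker$ of general graphs. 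In fact your step 2 already proves the stronger statement $\mathrm{core}(G)\cap N(Y)=\emptyset$ for every $B$-critical $Y$, of which (i) is a corollary.
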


Now we are ready to describe both $\ker$ and \textrm{diadem} of a bipartite
graph in terms of its bipartition.

\begin{theorem}
Let $G=\left(  A,B,E\right)  $ be a bipartite graph. Then the following
assertions are true:

\emph{(i) }$\ker_{A}\left(  G\right)  \cup$ $\ker_{B}\left(  G\right)
=\ker\left(  G\right)  $;

\emph{(ii) }$\left\vert \ker\left(  G\right)  \right\vert +\left\vert
\mathrm{diadem}\left(  G\right)  \right\vert =2\alpha\left(  G\right)  $;

\emph{(iii) }$\left\vert \ker_{A}\left(  G\right)  \right\vert +\left\vert
\mathrm{diadem}_{B}\left(  G\right)  \right\vert =\left\vert \ker_{B}\left(
G\right)  \right\vert +\left\vert \mathrm{diadem}_{A}\left(  G\right)
\right\vert =\alpha\left(  G\right)  $;

\emph{(iv) }$\mathrm{diadem}_{A}\left(  G\right)  \cup$ $\mathrm{diadem}%
_{B}\left(  G\right)  =$ $\mathrm{diadem}\left(  G\right)  $.
\end{theorem}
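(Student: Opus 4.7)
The plan is to prove the four parts in the order (i), (iv), (ii), (iii), since the later parts rest on the earlier ones. The workhorses will be Theorem \ref{Th4} (linking critical independent sets with $A$- and $B$-critical sets), Proposition \ref{Cor1} (kernel-neighborhood disjointness), Theorem \ref{th10} ($\ker(G)=\mathrm{core}(G)$ in the bipartite case), and Theorem \ref{th11} (which applies because every bipartite graph is K\"{o}nig-Egerv\'{a}ry).

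For (i), the inclusion $\ker_{A}(G)\cup\ker_{B}(G)\subseteq\ker(G)$ is immediate from Theorem \ref{Th4}(iv): for every critical independent set $Z$, the set $Z\cap A$ is $A$-critical, so $\ker_{A}(G)\subseteq Z$, and likewise $\ker_{B}(G)\subseteq Z$. For the reverse inclusion, I would argue that $\ker_{A}(G)\cup\ker_{B}(G)$ is itself a critical independent set: it is critical by Theorem \ref{Th4}(iii), and independent because each piece is independent within its own bipartition class while Proposition \ref{Cor1}(ii) rules out edges between them. Part (iv) follows the same blueprint: given $a\in\mathrm{diadem}_{A}(G)$ lying in an $A$-critical set $X$, the set $X\cup\ker_{B}(G)$ is critical by Theorem \ref{Th4}(iii) and independent by Proposition \ref{Cor1}(i) in its symmetric form (swapping the roles of $A$ and $B$), so $a\in\mathrm{diadem}(G)$; symmetrically for $\mathrm{diadem}_{B}$, and the reverse inclusion comes straight from Theorem \ref{Th4}(iv).

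Part (ii) is pure assembly: Theorem \ref{th10} gives $|\ker(G)|=|\mathrm{core}(G)|$, Theorem \ref{th11}(ii) gives $|\mathrm{diadem}(G)|=|\mathrm{corona}(G)|$, and Theorem \ref{th11}(i) supplies $|\mathrm{core}(G)|+|\mathrm{corona}(G)|=2\alpha(G)$, whose combination is the claim.

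The main obstacle is (iii). The plan is to establish the set identity
\[
B-\mathrm{diadem}_{B}(G)=N(\ker_{A}(G));
\]
the formula then drops out by counting, since $\ker_{A}(G)$ being $A$-critical gives $|N(\ker_{A}(G))|=|\ker_{A}(G)|-\delta_{0}(A)$, whence $|\ker_{A}(G)|+|\mathrm{diadem}_{B}(G)|=|B|+\delta_{0}(A)=\alpha(G)$ by Theorem \ref{Th4}(ii). I would obtain the set identity by a chain: part (iv) yields $\mathrm{diadem}_{B}(G)=\mathrm{diadem}(G)\cap B$; Theorem \ref{th11}(ii) replaces $\mathrm{diadem}(G)$ by $\mathrm{corona}(G)$; Theorem \ref{Th5}(ii) rewrites $B\setminus\mathrm{corona}(G)$ as $B\cap N(\mathrm{core}(G))$; and Theorem \ref{th10} together with part (i) identifies $\mathrm{core}(G)\cap A=\ker_{A}(G)$, while the bipartiteness of $G$ forces $B\cap N(\mathrm{core}(G)\cap B)=\emptyset$. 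Swapping $A$ and $B$ gives the twin identity $|\ker_{B}(G)|+|\mathrm{diadem}_{A}(G)|=\alpha(G)$. The subtle step is lining up this chain of identities correctly; once it is in place, the rest is just Theorem \ref{Th4}(ii).
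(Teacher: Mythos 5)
Your proof is correct, and for parts (iii) and (iv) it takes a genuinely different route from the paper. The paper establishes (ii), (iii) and (iv) simultaneously by a counting sandwich: it derives the inequalities $\left\vert \ker_{A}\left(  G\right)  \right\vert +\left\vert \mathrm{diadem}_{B}\left(  G\right)  \right\vert \leq\alpha\left(  G\right)  $ and $\left\vert \ker_{B}\left(  G\right)  \right\vert +\left\vert \mathrm{diadem}_{A}\left(  G\right)  \right\vert \leq\alpha\left(  G\right)  $ using only the one-sided disjointness $N\left(  \ker_{A}\left(  G\right)  \right)  \cap\mathrm{diadem}_{B}\left(  G\right)  =\emptyset$ from Proposition \ref{Cor1}\emph{(i)}, adds $\left\vert \mathrm{diadem}\left(  G\right)  \right\vert \leq\left\vert \mathrm{diadem}_{A}\left(  G\right)  \right\vert +\left\vert \mathrm{diadem}_{B}\left(  G\right)  \right\vert $, and compares the sum with $\left\vert \ker\left(  G\right)  \right\vert +\left\vert \mathrm{diadem}\left(  G\right)  \right\vert =\left\vert \mathrm{core}\left(  G\right)  \right\vert +\left\vert \mathrm{corona}\left(  G\right)  \right\vert =2\alpha\left(  G\right)  $ to force equality everywhere at once. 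You instead prove (iv) directly and constructively: every $A$-critical set $X$ sits inside the critical independent set $X\cup\ker_{B}\left(  G\right)  $, whose independence comes from the symmetric form of Proposition \ref{Cor1}\emph{(i)} --- this is arguably more transparent than extracting (iv) from a chain of tight inequalities. You then obtain (iii) from the exact set identity $B-\mathrm{diadem}_{B}\left(  G\right)  =N\left(  \ker_{A}\left(  G\right)  \right)  $, which is strictly stronger than the inclusion the paper needs; the price is that your (iii) consumes more machinery (Theorem \ref{Th5}\emph{(ii)}, Theorem \ref{th11}\emph{(ii)}, Theorem \ref{th10}, and your parts (i) and (iv)), while the reward is an explicit structural description of the complement of $\mathrm{diadem}_{B}\left(  G\right)  $ in $B$. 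Your parts (i) and (ii) essentially coincide with the paper's arguments. One implicit step you share with the paper and could flag: both arguments use that $\ker_{A}\left(  G\right)  $ and $\ker_{B}\left(  G\right)  $ are themselves $A$-, respectively $B$-critical (so that $\left\vert N\left(  \ker_{A}\left(  G\right)  \right)  \right\vert =\left\vert \ker_{A}\left(  G\right)  \right\vert -\delta_{0}(A)$ and Theorem \ref{Th4}\emph{(iii)} applies to them); this follows from supermodularity of $\delta$ restricted to one side of the bipartition, in analogy with Theorem \ref{th4}\emph{(ii)}.
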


\begin{proof}
\emph{(i)} By Theorem \ref{Th4}\emph{(iii)}, $\ker_{A}\left(  G\right)  \cup$
$\ker_{B}\left(  G\right)  $ is critical in $G$. Moreover, the set $\ker
_{A}\left(  G\right)  \cup$ $\ker_{B}\left(  G\right)  $ is independent in
accordance with Proposition \ref{Cor1}\emph{(ii)}. Assume that $\ker
_{A}\left(  G\right)  \cup$ $\ker_{B}\left(  G\right)  $ is not minimal. Hence
the unique minimal $d$-critical set of $G$, say $Z$, is a proper subset of
$\ker_{A}\left(  G\right)  \cup$ $\ker_{B}\left(  G\right)  $, by Theorem
\ref{th4}\emph{(iii)}. According to Theorem \ref{Th4}\emph{(iv)}, $Z_{A}=Z\cap
A$ is an $A$-critical set, which implies $\ker_{A}\left(  G\right)  \subseteq
Z_{A}$, and similarly, $\ker_{B}\left(  G\right)  \subseteq Z_{B}$.
Consequently, we get that $\ker_{A}\left(  G\right)  \cup$ $\ker_{B}\left(
G\right)  \subseteq Z$, in contradiction with the fact that $\ker_{A}\left(
G\right)  \cup$ $\ker_{B}\left(  G\right)  \neq Z\subset\ker_{A}\left(
G\right)  \cup$ $\ker_{B}\left(  G\right)  $.

\emph{(ii), (iii), (iv) }By Proposition \ref{Cor1}\emph{(i)}, we have
\[
\left\vert \ker_{A}\left(  G\right)  \right\vert -\delta_{0}(A)+\left\vert
\text{\textrm{diadem}}_{B}\left(  G\right)  \right\vert =\left\vert N\left(
\ker_{A}\left(  G\right)  \right)  \right\vert +\left\vert
\text{\textrm{diadem}}_{B}\left(  G\right)  \right\vert \leq\left\vert
B\right\vert .
\]
Hence, according to Theorem \ref{Th4}\emph{(ii)}, it follows that
\[
\left\vert \ker_{A}\left(  G\right)  \right\vert +\left\vert
\text{\textrm{diadem}}_{B}\left(  G\right)  \right\vert \leq\left\vert
B\right\vert +\delta_{0}(A)=\alpha\left(  G\right)  .
\]
Changing the roles of $A$ and $B$, we obtain
\[
\left\vert \ker_{B}\left(  G\right)  \right\vert +\left\vert
\text{\textrm{diadem}}_{A}\left(  G\right)  \right\vert \leq\alpha\left(
G\right)  .
\]

By Theorem \ref{Th4}\emph{(iv)}, \textrm{diadem}$\left(  G\right)  \cap A$ is
$A$-critical and \textrm{diadem}$\left(  G\right)  \cap B$ is $B$-critical.
Hence \textrm{diadem}$\left(  G\right)  \cap A\subseteq$ \textrm{diadem}%
$_{A}\left(  G\right)  $ and \textrm{diadem}$\left(  G\right)  \cap
B\subseteq$ \textrm{diadem}$_{B}\left(  G\right)  $. It implies both the
inclusion $\mathrm{diadem}\left(  G\right)  \subseteq\mathrm{diadem}%
_{A}\left(  G\right)  \cup\mathrm{diadem}_{B}\left(  G\right)  $, and the
inequality
\[
\left\vert \mathrm{diadem}\left(  G\right)  \right\vert \leq\left\vert
\mathrm{diadem}_{A}\left(  G\right)  \right\vert +\left\vert \mathrm{diadem}%
_{B}\left(  G\right)  \right\vert .
\]

Combining Theorem \ref{th10}, Theorem \ref{th11}\emph{(i),(ii)}, and part
\emph{(i)} with the above inequalities, we deduce%
\begin{gather*}
2\alpha\left(  G\right)  \geq\left\vert \ker_{A}\left(  G\right)  \right\vert
+\left\vert \ker_{B}\left(  G\right)  \right\vert +\left\vert
\text{\textrm{diadem}}_{A}\left(  G\right)  \right\vert +\left\vert
\text{\textrm{diadem}}_{B}\left(  G\right)  \right\vert \geq\\
\geq\left\vert \ker\left(  G\right)  \right\vert +\left\vert
\text{\textrm{diadem}}\left(  G\right)  \right\vert =\left\vert
\text{\textrm{core}}\left(  G\right)  \right\vert +\left\vert
\text{\textrm{corona}}\left(  G\right)  \right\vert =2\alpha\left(  G\right)
.
\end{gather*}

Consequently, we infer that%
\begin{gather*}
\left\vert \text{\textrm{diadem}}_{A}\left(  G\right)  \right\vert +\left\vert
\text{\textrm{diadem}}_{B}\left(  G\right)  \right\vert =\left\vert
\text{\textrm{diadem}}\left(  G\right)  \right\vert ,\\
\left\vert \ker\left(  G\right)  \right\vert +\left\vert \text{\textrm{diadem}%
}\left(  G\right)  \right\vert =2\alpha\left(  G\right)  ,\\
\left\vert \ker_{A}\left(  G\right)  \right\vert +\left\vert
\text{\textrm{diadem}}_{B}\left(  G\right)  \right\vert =\left\vert \ker
_{B}\left(  G\right)  \right\vert +\left\vert \text{\textrm{diadem}}%
_{A}\left(  G\right)  \right\vert =\alpha\left(  G\right)  .
\end{gather*}
Since $\mathrm{diadem}\left(  G\right)  \subseteq\mathrm{diadem}_{A}\left(
G\right)  \cup\mathrm{diadem}_{B}\left(  G\right)  $ and $\mathrm{diadem}%
_{A}\left(  G\right)  \cap\mathrm{diadem}_{B}\left(  G\right)  =\emptyset$, we
finally obtain that
\[
\mathrm{diadem}_{A}\left(  G\right)  \cup\mathrm{diadem}_{B}\left(  G\right)
=\mathrm{diadem}\left(  G\right)  ,
\]
as claimed.
\end{proof}

\section{Conclusions}

In this paper we focus on interconnections between $\ker$, \textrm{core,
diadem,} and \textrm{corona}. In \cite{LevManLemma2011} we showed that
$2\alpha\left(  G\right)  \leq\left\vert \text{\textrm{core}}\left(  G\right)
\right\vert +\left\vert \text{\textrm{corona}}\left(  G\right)  \right\vert $
is true for every graph, while the equality holds whenever $G$ is a
K\"{o}nig-Egerv\'{a}ry graph, by Theorem \ref{th11}\emph{(i)}.

According to Theorem \ref{th6}, $\mathrm{\ker}(G)\subseteq\mathrm{core}(G)$
for every graph. On the other hand, Theorem \ref{th3}\emph{ }implies the
inclusion \textrm{diadem}$\left(  G\right)  \subseteq\mathrm{corona}(G)$. Hence%

\[
\left\vert \ker\left(  G\right)  \right\vert +\left\vert \text{\textrm{diadem}%
}\left(  G\right)  \right\vert \leq\left\vert \text{\textrm{core}}\left(
G\right)  \right\vert +\left\vert \text{\textrm{corona}}\left(  G\right)
\right\vert
\]
for each graph $G$. These remarks together with Theorem \ref{th11}\emph{(iii)}
motivate the following.

\begin{conjecture}
$\left\vert \ker\left(  G\right)  \right\vert +\left\vert
\text{\textrm{diadem}}\left(  G\right)  \right\vert \leq2\alpha\left(
G\right)  $ is true for every graph $G$.
\end{conjecture}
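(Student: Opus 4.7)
The plan is to reduce the conjecture to the K\"onig-Egerv\'ary case of Theorem \ref{th11}\emph{(iii)} via a partition-and-matching argument. Fix a maximum independent set $S^{\ast}$ containing $\mathrm{\ker}(G)$, which exists by Theorem \ref{th3} since $\mathrm{\ker}(G)$ is itself a critical independent set (Theorem \ref{th4}\emph{(iii)}). Because every critical independent set contains $\mathrm{\ker}(G)$ and is independent, no vertex of $\mathrm{diadem}(G)$ lies in $N(\mathrm{\ker}(G))$. Partitioning $V(G) = \mathrm{\ker}(G) \cup N(\mathrm{\ker}(G)) \cup (S^{\ast} \setminus \mathrm{\ker}(G)) \cup C$ with $C = V(G) \setminus (S^{\ast} \cup N(\mathrm{\ker}(G)))$, this disjointness gives
\[
|\mathrm{diadem}(G)| \leq |\mathrm{\ker}(G)| + |S^{\ast} \setminus \mathrm{\ker}(G)| + |\mathrm{diadem}(G) \cap C| = \alpha(G) + |\mathrm{diadem}(G) \cap C|,
\]
so the conjecture reduces to showing $|\mathrm{diadem}(G) \cap C| \leq |S^{\ast} \setminus \mathrm{\ker}(G)| = \alpha(G) - |\mathrm{\ker}(G)|$.

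The natural way to prove this is to exhibit a matching of $G$ from $\mathrm{diadem}(G) \cap C$ into $S^{\ast} \setminus \mathrm{\ker}(G)$. For each $v \in \mathrm{diadem}(G) \cap C$, pick a critical independent set $S_{v} \ni v$ and extend it to a maximum independent set $S_{v}^{\ast} \supseteq \mathrm{\ker}(G) \cup \{v\}$ via Theorem \ref{th3}. A standard exchange argument, building the independent set $(S^{\ast} \setminus N_{G}(U)) \cup U$ for any Hall-violating $U \subseteq S_{v}^{\ast} \setminus S^{\ast}$ and contradicting maximality of $\alpha(G)$, forces a perfect matching in $G[S^{\ast} \triangle S_{v}^{\ast}]$ between $S_{v}^{\ast} \setminus S^{\ast}$ and $S^{\ast} \setminus S_{v}^{\ast} \subseteq S^{\ast} \setminus \mathrm{\ker}(G)$. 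Consequently every $v \in \mathrm{diadem}(G) \cap C$ has at least one neighbor in $S^{\ast} \setminus \mathrm{\ker}(G)$, which settles Hall's condition for singletons in the desired global bipartite matching.

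The main obstacle is globalizing these individual witnesses into a single matching, i.e.\ verifying Hall's condition $|N_{G}(T) \cap (S^{\ast} \setminus \mathrm{\ker}(G))| \geq |T|$ for every $T \subseteq \mathrm{diadem}(G) \cap C$. The same "enlarge $S^{\ast}$" trick goes through cleanly only when $T$ is itself independent in $G$, whereas $\mathrm{diadem}(G)$ is generally non-independent. My intended remedy is to choose a minimum violating $T$ and combine the supermodularity of $d$ (Theorem \ref{th4}\emph{(i)}) with the Hall-type characterization of $\mathrm{\ker}(G)$ (Theorem \ref{th9}\emph{(ii)}) to extract from such a $T$ a non-empty $Z \subseteq N(\mathrm{\ker}(G))$ with $|N(Z) \cap \mathrm{\ker}(G)| = |Z|$, contradicting that characterization. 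As a parallel route I would try the subgraph reduction $H = G[I_{c} \cup N(I_{c})]$ for $I_{c}$ a maximum critical independent set: $\mathrm{\ker}(H) = \mathrm{\ker}(G)$ as already noted in the paper, $H$ is expected to be K\"onig-Egerv\'ary, and if the containment $\mathrm{diadem}(G) \subseteq V(H)$ can be established, then Theorem \ref{th11}\emph{(iii)} applied to $H$ closes the argument; the hard part here is exactly that containment, since Theorem \ref{th4}\emph{(ii)} only forces some vertex of each critical independent set $S \not\subseteq I_{c}$ to lie in $N(I_{c})$ rather than all of $S \setminus I_{c}$.
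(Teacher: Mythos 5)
This statement is the open \emph{Conjecture} of the paper: the authors offer no proof of it (they only observe consequences ``when it is proved''), so there is no argument of theirs to measure yours against, and a complete proof here would be new. Your proposal, however, is not a complete proof; by your own account the two decisive steps are left as intentions. What you do establish is correct: the partition of $V(G)$ using an $S^{\ast}\in\Omega(G)$ with $\mathrm{\ker}(G)\subseteq S^{\ast}$, the observation that $\mathrm{diadem}(G)\cap N(\mathrm{\ker}(G))=\emptyset$ (every critical independent set contains $\mathrm{\ker}(G)$ by definition and is independent), the resulting reduction to $\left\vert \mathrm{diadem}(G)\cap C\right\vert \leq\alpha(G)-\left\vert \mathrm{\ker}(G)\right\vert$, and the exchange argument producing a perfect matching between $S_{v}^{\ast}\setminus S^{\ast}$ and $S^{\ast}\setminus S_{v}^{\ast}$ for two maximum independent sets, which yields Hall's condition for singletons. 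But the whole difficulty of the conjecture is concentrated precisely in the step you defer: verifying Hall's condition for arbitrary $T\subseteq\mathrm{diadem}(G)\cap C$. The proposed remedy is not an argument: a Hall-violating $T$ lives in $C$, disjoint from $N(\mathrm{\ker}(G))$, and no mechanism is indicated for transporting that violation into a nonempty $Z\subseteq N(\mathrm{\ker}(G))$ with $\left\vert N(Z)\cap\mathrm{\ker}(G)\right\vert =\left\vert Z\right\vert$ so that Theorem \ref{th9}\emph{(ii)} can be invoked; supermodularity of $d$ by itself gives no handle on $T$, which need not be independent nor critical. Note also that you are trying to prove something strictly stronger than needed (a matching from $\mathrm{diadem}(G)\cap C$ into $S^{\ast}\setminus\mathrm{\ker}(G)$, rather than the cardinality inequality), so even if the conjecture is true this particular route could fail.

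The fallback via $H=G\left[I_{c}\cup N(I_{c})\right]$ has the same status: the containment $\mathrm{diadem}(G)\subseteq I_{c}\cup N(I_{c})$, which you correctly identify as the hard part, does not follow from anything quoted in the paper. For a critical independent set $S$ and a vertex $v\in S\setminus N\left[I_{c}\right]$, the set $S\cup I_{c}$ is critical by Theorem \ref{th4}\emph{(ii)} but need not be independent, and $I_{c}\cup\{v\}$, while independent, need not be critical, so $v$ is not immediately excluded. In short: the reduction and the singleton case are sound, but the proposal stops exactly where the conjecture begins, and neither of the two suggested continuations is carried out or obviously completable with the tools available in this paper.
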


When it is proved one can conclude that the following inequalities:
\[
\left\vert \ker\left(  G\right)  \right\vert +\left\vert \text{\textrm{diadem}%
}\left(  G\right)  \right\vert \leq2\alpha\left(  G\right)  \leq\left\vert
\text{\textrm{core}}\left(  G\right)  \right\vert +\left\vert
\text{\textrm{corona}}\left(  G\right)  \right\vert
\]
hold for every graph $G$.

By Corollary \ref{cor2}, $\mathrm{core}(G)$ is critical for every
K\"{o}nig-Egerv\'{a}ry graph. It justifies the following.

\begin{problem}
Characterize graphs such that $\mathrm{core}(G)$ is a critical set.
\end{problem}

Theorem \ref{th10} claims that the sets $\mathrm{\ker}(G)$ and $\mathrm{core}%
(G)$ coincide for bipartite graphs. On the other hand, there are examples
showing that this equality holds even for some non-K\"{o}nig-Egerv\'{a}ry
graphs (see Figure \ref{fig14}). We propose the following.

\begin{problem}
Characterize graphs with $\ker\left(  G\right)  =\mathrm{core}(G)$.
\end{problem}

\section{Acknowledgments}

The authors would like to thank the organizers of the International Conference
in Discrete Mathematics (ICDM 2013) for an opportunity to give a special
invited talk including their recent findings.

\end{document}